\theoremstyle{plain}
\newtheorem{thm}{Theorem}[section]
\newtheorem{prop}[thm]{Proposition}
\newtheorem{lemma}[thm]{Lemma}
\newtheorem{rem}[thm]{Remark}
\newcommand{\CO}{\mathrm{CO}}
\newcommand{\Sim}{\mathrm{Sim}}
\newcommand{\R}{\mathbb{R}}
\newcommand{\Z}{\mathbb{Z}}
\begin{document}
\begin{center}
\begin{Large}
Isogonal
Deformation of Discrete Plane Curves\\
and Discrete Burgers Hierarchy\\[5mm]
\end{Large}
\begin{normalsize}
Kenji {\sc Kajiwara}\\
Institute of Mathematics for Industry, Kyushu University\\
744 Motooka, Fukuoka 819-0395, Japan\\
e-mail: kaji@imi.kyushu-u.ac.jp\\[2mm]
Toshinobu {\sc Kuroda}\\
Uwajima South Secondary School\\
5-1 Bunkyocho, Uwajima, Ehime 798-0066, Japan\\[2mm]
Nozomu {\sc Matsuura}\\
Department of Applied Mathematics, Fukuoka University\\
Nanakuma 8-19-1, Fukuoka 814-0180, Japan\\
e-mail: nozomu@fukuoka-u.ac.jp
\end{normalsize}
\end{center}
\begin{abstract}
We study deformations of plane curves in the similarity geometry.  It is
known that continuous deformations of smooth curves are described by the
Burgers hierarchy.  In this paper, we formulate the discrete deformation
of discrete plane curves described by the discrete Burgers hierarchy as
isogonal deformations. We also construct explicit formulas for the curve
deformations by using the solution of linear diffusion
differential/difference equations.
\end{abstract}

\section{Introduction}
Integrable deformations of curves play crucial roles in the differential geometry of space/plane
curves \cite{Rogers-Schief:book}. Formulating the deformation of curves as the simultaneous system
of the Frenet-Serret formula for the Frenet frame of curves and its deformation equation, it
naturally gives rise to various integrable systems. This framework can be discretized so that it is
consistent with the theory of discrete integrable systems, which is sometimes referred to as the
discrete differential geometry \cite{Bobenko-Suris:book}. Various deformations of discrete curves
have been formulated in this context
\cite{Doliwa-Santini:PLA,Doliwa-Santini:JMP,Doliwa-Santini:dsG,Fujioka-Kurose:Burgers,Hisakado-Nakayama-Wadati,Hisakado-Wadati,Hoffmann:dNLS,Hoffmann-Kutz,Nakayama:JPSJ2007,Nakayama_Segur_Wadati:PRL,Nishinari,Pinkall:dNLS}.
The theory of discrete differential geometry of curves is now making progress in explicit
constructions of curves, by using the theory of $\tau$ functions
\cite{IKMO:KJM,IKMO:JPA,IKMO:dmKdV_space_curve,Matsuura:IMRN}.

When we change the geometric structure of space/plane in the framework of Klein geometry, the curve
motions are governed by various integrable equations
\cite{Chou-Qu:2002_PD,Chou-Qu:2003_JNS,Chou-Qu:2004_CSF}. Therefore it may be an interesting and
important problem to discretize such deformations of curves consistently with corresponding
integrable structures.

In this paper, we consider deformation of the plane curves in the similarity geometry, which is a
Klein geometry associated with the linear conformal group. In this setting, it is known that the
Burgers hierarchy describes the deformations of similarity curvature of curves. We present discrete
deformations of discrete plane curves in the similarity geometry described by the discrete Burgers
hierarchy as the isogonal deformations in which each angle of adjascent segments is preserved.  The
lattice intervals of the hierarchy are generalized to arbitrary functions of corresponding
independent variables. Using this formulation, we present explicit formulas of curves for both
smooth and discrete cases. We note that the (complex) Burgers equation and its discrete analogue
also arise in the curve deformations in complex hyperbola, where the Hamiltonian formulation of the
deformation of smooth curves is discussed \cite{Fujioka-Kurose:Burgers}.

In Section \ref{sec:continuous}, we give a brief summary of deformation of smooth plane curves in
the similarity geometry, and we see that the Burgers hierarchy naturally arises as the equations for
the similarity curvature. We also construct the explicit formula for the family of plane curves
corresponding to the shock wave solutions to the Burgers equation. In Section
\ref{sec:discrete_curve}, we discretize the whole theory described in Section \ref{sec:continuous}
so that the deformations are governed by the discrete Burgers hierarchy.  Formulations of the Burgers
and the discrete Burgers hierarchies are discussed in detail in Appendix.

In \cite{Inoguchi:MEIS2015, Shimizu-Sato:JSIAM}, the deformation theory
of plane curves in the similarity geometry can be applied to the
construction and generalization of aesthetic curves in CAD. Also, in
\cite{FIKMO:hodograph,FIKMO:Dym,FMO:SP,FMO:CH,FMO:short_CH,FMO:SP_Lax}
discretizations for the class of nonlinear differential equations
describing the motions of plane curves are constructed by using the
geometric formulations, resulting in self-adaptive moving mesh discrete
model of the original equation. This discretization enables to contruct
highly accurate numerical scheme of given equation. The Burgers equation
is widely used as the universal model describing one-dimensional
nonlinear dissipative system after various transformations which are
difficult to discretize.  It may be possible to construct various useful
discrete models by using the result in this paper.  We hope that the
results in this paper serves as the basis of such industry-based
problems.

\section{Deformation of smooth curves}\label{sec:continuous}

Let $\gamma = \gamma (s)$ be a smooth curve in $\R^2$, $s$ be the
arc-length, and $\kappa$ be the curvature of $\gamma$.  We denote by
$\Sim (2)$ the similarity transformation group of $\R^2$, that is, $\Sim
(2) = \mathrm{CO} (2) \ltimes \R^2$ where $\mathrm{CO} (2)$ is the
linear conformal group
\begin{equation*}
\mathrm{CO} (2)
=
\left\{A \in \mathrm{GL} \left(2, \R\right)
\;\big|\;
{}^\mathrm{t}\!A A = c^2\, \mathrm{id} \, \text{ for some constant } c
\right\}.
\end{equation*}
The $\Sim (2)$-invariant parameter $x$ is given by the angle function
\begin{equation}\label{sim-param}
x = \int^s \kappa (s)\, ds,
\end{equation}
and the $\Sim (2)$-invariant curvature $u$ is defined as
\begin{equation}\label{sim-curv}
u =
\dfrac{1}{\kappa^2}
\dfrac{d \kappa}{ds}.
\end{equation}
The $x$ and $u$ are called the {\em similarity arc-length parameter} and
the {\em similarity curvature}, respectively.  If the similarity
curvature is constant $u = k_1$, then the inverse of Euclidean curvature
is $1/\kappa = - k_1 s + k_2$ for some constant $k_2$.  Thus $\gamma$ is
a log-spiral (if $k_1 \neq 0$) or a circle (if $k_1 = 0, k_2 \neq 0$).

The $\Sim (2)$-invariant frame $\phi = \left[T, N\right]$ is given by
\begin{equation}\label{eqn:T_N_smooth}
T = \gamma',\quad
N =
\begin{bmatrix}
0 & - 1\\
1 & 0
\end{bmatrix}
T,
\end{equation}
where the prime means differentiation with respect to the similarity
arc-length parameter $x$.  The $\mathrm{SO} (2)$-invariant frame (the
Frenet frame) $\phi_{\rm E}$ given by
\begin{equation*}
\phi_\mathrm{E}
= \left[T_\mathrm{E}, N_\mathrm{E}\right]
= \kappa \phi,\quad 
T_\mathrm{E} = \frac{d}{ds} \gamma,\quad N_\mathrm{E}=\begin{bmatrix}
0 & - 1\\
1 & 0
\end{bmatrix}T_\mathrm{E},
\end{equation*}
varies according to the Frenet formula
\begin{equation*}
\frac{d}{ds} \phi_\mathrm{E}
=
\phi_\mathrm{E}
\begin{bmatrix}
0 & - \kappa\\
\kappa & 0
\end{bmatrix}.
\end{equation*}
Therefore, by using \eqref{sim-param} and \eqref{sim-curv}, we have
\begin{equation}\label{lax-x}
\phi' = \phi
\begin{bmatrix}
- u & - 1\\
1 & - u
\end{bmatrix}.
\end{equation}

We denote by $\gamma (x, t)$ a deformation of a curve $\gamma (x)$.  We
use the dot to indicate differentiation with respect to time $t$.
Writing $\dot\gamma$ as the linear combination of $T$ and $N$ as
\begin{equation*}
\dot{\gamma}
=
f(x,t) T + g(x,t) N,
\end{equation*}
we have by using \eqref{eqn:T_N_smooth} that
\begin{equation}\label{lax-t}
\dot{\phi}
=
\phi
\begin{bmatrix}
f' - f u - g &
- g' + g u - f\\
g' - g u + f &
f' - f u - g
\end{bmatrix}.
\end{equation}
The compatibility condition of the linear system \eqref{lax-x} and
\eqref{lax-t} is given by
\begin{gather}
g' - g u + f
=\label{compat-1}
 a,\\
\dot{u} + \left(f' - f u - g\right)'
=\label{compat-2}
0,
\end{gather}
for some function $a = a (t)$.
Especially, choosing $f =  a - u,\, g = - 1$ and denoting $t=t_2$,
we have
\begin{align}
\frac{\partial \phi}{\partial t_2}
&=\label{eqn:phi_M_Burgers}
\phi
\begin{bmatrix}
- u' + u^2 + 1 - a u & -a\\
 a & - u' + u^2 + 1 - a u
\end{bmatrix},\\
\frac{\partial u}{\partial t_2}
&=\label{eqn:Burgers_with_a}
u'' - 2 u u' + a u'.
\end{align}
Equation \eqref{eqn:Burgers_with_a} is called the
\textit{Burgers equation},
which is linearized to
\begin{equation}
\frac{\partial}{\partial t_2} q
=\label{linear:Burgers}
\left(\frac{\partial^2}{\partial x^2} + 1
+ a \frac{\partial}{\partial x}\right) q,
\end{equation}
via the Cole-Hopf transformation \cite{Hopf}
\begin{equation}\label{Cole-Hopf}
u
=
- \left(\log q\right)'.
\end{equation}
Further, the Burgers hierarchy naturally arises as follows
\cite{Chou-Qu:2002_PD,Chou-Qu:2003_JNS,Chou-Qu:2004_CSF}. 
Substituting \eqref{compat-1} into \eqref{compat-2}, we have that
\begin{equation}\label{eqn:udot}
\dot{u}
=
\left(\Omega^2 + 1\right) g' + a u',
\end{equation}
where $\Omega = \partial_x - u - u' \partial_x^{- 1}$ is the recursion
operator of the Burgers hierarchy (see Appendix \ref{app:hier}). Here,
$\partial_x^{-1}$ is the formal integration operator with respect to
$x$, and in the following, the integration constant should be chosen to
be $0$. In view of this, we introduce an infinite number of time
variables $t=(t_2,t_3,t_4,\ldots)$, and choose $g' = \Omega^{i - 3} u'$
($i \geq 3$).  Then the higher flow with respect to the new time
variable $t_i$ is given by
\begin{equation}\label{def:higher-motion}
\frac{\partial}{\partial t_i} \phi
=
\phi
\begin{bmatrix}
- \partial_x^{- 1}
\left(\Omega^{i - 1} + \Omega^{i - 3} + a\right) u' &
-a\\
 a &
- \partial_x^{- 1}
\left(\Omega^{i - 1} + \Omega^{i - 3} + a\right) u'
\end{bmatrix}.
\end{equation}
The compatibility condition between \eqref{lax-x} and
\eqref{def:higher-motion} is the $i$-th Burgers equation
\begin{equation}\label{eqn:n-th_Burgers}
\frac{\partial}{\partial t_i} u
= \left(\Omega^{i - 1} + \Omega^{i - 3} + a\right) u',
\end{equation}
which is linearized to
\begin{equation}
\frac{\partial}{\partial t_i} q
=\label{linear}
\left(\frac{\partial^i}{\partial x^i}
+ \frac{\partial^{i - 2}}{\partial x^{i - 2}}
+ a \frac{\partial}{\partial x}\right) q,
\end{equation}
via the Cole-Hopf transformation \eqref{Cole-Hopf}. Note that the case
of $i=2$ of \eqref{linear} recovers \eqref{linear:Burgers}.

It is possible to express the postion vector $\gamma$ in terms of $q$ as
follows.  The inverse of Euclidean curvature satisfies $1/\kappa = c q$
for some function $c = c \left(t\right)$, because the similarity
curvature is logarithmic differentiation of $\kappa$, that is, $u$
satisfies that
\begin{equation*}
u = \dfrac{1}{\kappa^2} \dfrac{\partial \kappa}{\partial s}
= \dfrac{\kappa'}{\kappa^2} \dfrac{\partial x}{\partial s}
= \dfrac{\kappa'}{\kappa}
= \left(\log \kappa\right)'.
\end{equation*}
Since the similarity arclength parameter $x$ is the angle function,
we have
\begin{equation*}
T_\mathrm{E} =
\begin{bmatrix}
\cos(x+x_0)\\
\sin(x+x_0)
\end{bmatrix},
\end{equation*} 
where we have incorporated the ambiguity of
the angle function $x_0=x_0(t)$ explicitly. Hence
\begin{equation}\label{eqn:explict_formula_smooth}
\gamma
= \int^x T dx
= \int^x \dfrac{\,1\,}{\kappa}\, T_\mathrm{E}\, dx
= \int^x c(t) q(x,t)
\begin{bmatrix}
\cos(x+x_0(t))\\
\sin(x+x_0(t))
\end{bmatrix}
dx.
\end{equation}
We determine $c$ and $x_0$ by the deformation equation
\eqref{def:higher-motion}.  By differentiating $T$ by $t_i$ (here
$\dot{\phantom{c}}$ denotes $\partial_{t_i}$), we have by substituting
\eqref{eqn:explict_formula_smooth} into \eqref{def:higher-motion},
\begin{equation*}
\dot{T} = (\dot{c} q + c\dot{q})
\begin{bmatrix}
\cos(x+x_0) \\ \sin(x+x_0)
\end{bmatrix}
+ cq\dot{x_0}
\begin{bmatrix}
-\sin(x+x_0) \\ \cos(x+x_0)
\end{bmatrix}
= \left(\frac{\dot{c}}{c}+\frac{\dot{q}}{q}\right) T
+ \dot{x_0} N.
\end{equation*}
Note that from \eqref{eqn:n-th_Burgers} and \eqref{Cole-Hopf} we have $-
\partial_x^{- 1}\left(\Omega^{i - 1} + \Omega^{i - 3} + a\right)
u' = \dot{q}/q$.  Similarly for the case of $i=2$ we also have
$- u' + u^2 + 1 - a u = \dot{q}/q$ from \eqref{linear:Burgers}
and \eqref{Cole-Hopf}.  Then from \eqref{def:higher-motion} we obtain
\begin{equation*}
\dot{T}=\frac{\dot{q}}{q}T + aN,
\end{equation*}
which implies $c(t)=c$(const.) and $x_0 = A(t)$ where $\dot{A}(t)=a(t)$. Therefore we obtain:
%
\begin{prop} \label{prop:rep_formula:smooth}
Let $\gamma=\gamma(x,t)$, $t=(t_2,t_3,t_4,\ldots)$ be a position vector
of the plane curve in the similarity geometry satisfying \eqref{lax-x},
\eqref{eqn:phi_M_Burgers} and \eqref{def:higher-motion}. Then $\gamma$
admits the representation formula
\begin{equation}\label{eqn:representation}
 \gamma = \int^x c q(x,t)
\begin{bmatrix}
\cos\theta(x,t)\\
\sin\theta(x,t)
\end{bmatrix}
dx,\quad \theta(x,t)=x+A(t),
\end{equation}
where
\begin{equation*}
 \frac{\partial A(t)}{\partial t_i}=a(t), \quad i=2,3,4,\ldots,
\end{equation*}
$c$ is a constant, and $q(x,t)$ satisfies \eqref{linear}.
\end{prop}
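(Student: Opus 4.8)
The plan is to construct the formula directly from the geometry and then fix the two $t$-dependent quantities $c$ and $x_0$ by imposing the deformation equation \eqref{def:higher-motion}. First I would integrate the similarity tangent $T=\gamma'$ with respect to $x$. Because $x$ is the angle function \eqref{sim-param}, the Euclidean unit tangent is $T_\mathrm{E}=\left[\cos(x+x_0),\,\sin(x+x_0)\right]^{\mathrm t}$, where the integration constant $x_0=x_0(t)$ records the freedom in the angle origin at each time. From $\phi_\mathrm{E}=\kappa\phi$ we have $T=\tfrac{1}{\kappa}T_\mathrm{E}$, so integration gives $\gamma=\int^x \tfrac{1}{\kappa}T_\mathrm{E}\,dx$.

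Next I would rewrite $1/\kappa$ through the Cole--Hopf variable. Comparing \eqref{sim-curv}, which yields $u=(\log\kappa)'$, with \eqref{Cole-Hopf}, which gives $u=-(\log q)'$, we obtain $(\log \kappa q)'=0$; hence $\kappa q$ depends only on $t$ and $1/\kappa=c(t)\,q$ for some $c=c(t)$. Substituting produces the candidate \eqref{eqn:representation} with $\theta=x+x_0$, reducing the proposition to showing $\dot c=0$ and $\dot x_0=a$.

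The heart of the argument is the consistency check against \eqref{def:higher-motion}. Differentiating $T=cq\left[\cos\theta,\,\sin\theta\right]^{\mathrm t}$ in $t_i$ and resolving the result in the frame $\{T,N\}$ gives $\dot T=\bigl(\tfrac{\dot c}{c}+\tfrac{\dot q}{q}\bigr)T+\dot x_0\,N$. On the other hand, reading off the first column of \eqref{def:higher-motion} gives $\dot T=D\,T+a\,N$, where $D=-\partial_x^{-1}(\Omega^{i-1}+\Omega^{i-3}+a)u'$ is the diagonal entry. Using the $i$-th Burgers equation \eqref{eqn:n-th_Burgers}, $D=-\partial_x^{-1}\partial_{t_i}u$, and since $u=-(\log q)'$ a single $x$-integration (with the prescribed zero integration constant) identifies $D=\dot q/q$; the base flow $i=2$ gives the same identity from \eqref{linear:Burgers} and \eqref{Cole-Hopf}. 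Matching the two expressions for $\dot T$ componentwise forces $\dot c/c=0$ and $\dot x_0=a$, so $c$ is constant and $x_0=A(t)$ with $\dot A=a$, which is exactly \eqref{eqn:representation}.

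The step I expect to be the main obstacle --- indeed the only nonroutine point --- is the identification $D=\dot q/q$. It requires trading the recursion-operator expression $-\partial_x^{-1}(\Omega^{i-1}+\Omega^{i-3}+a)u'$ for a logarithmic time-derivative of $q$ via the Cole--Hopf transformation, and confirming that the convention of vanishing integration constant in $\partial_x^{-1}$ is compatible simultaneously for all higher flows $t_i$ and for the base flow $t_2$. Once $D$ is recognized as $\dot q/q$, the matching of frame coefficients and the conclusions $\dot c=0$, $\dot x_0=a$ follow immediately.
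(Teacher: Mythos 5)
Your proposal is correct and follows essentially the same route as the paper's own proof: expressing $\gamma=\int^x \tfrac{1}{\kappa}T_\mathrm{E}\,dx$, identifying $1/\kappa=c(t)\,q$ by comparing $u=(\log\kappa)'$ with the Cole--Hopf relation $u=-(\log q)'$, and then fixing $\dot c=0$, $\dot x_0=a$ by matching the frame decomposition of $\dot T$ against \eqref{def:higher-motion} via the identification $-\partial_x^{-1}\left(\Omega^{i-1}+\Omega^{i-3}+a\right)u'=\dot q/q$ (with the $i=2$ case handled through \eqref{linear:Burgers}). The step you flag as the main obstacle is exactly the one the paper resolves the same way, using \eqref{eqn:n-th_Burgers}, \eqref{Cole-Hopf}, and the vanishing-integration-constant convention for $\partial_x^{-1}$.
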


For a shock wave solution to the Burgers hierarchy, we can explictly
construct the position vector.  For a positive integer $M$,
\begin{equation*}
q (x, t)
=
e^{t_2} + \sum_{k = 1}^M
\exp \left(\lambda_k x +
\sum_{i = 2}^\infty
\left({\lambda_k}^i + {\lambda_k}^{i - 2} + a\lambda_k\right) t_i
+ \xi_k\right),
\end{equation*}
solves the linear equation \eqref{linear}, where $\lambda_1, \xi_1,
\ldots, \lambda_M, \xi_M$ are parameters. Then
\eqref{eqn:representation} gives
\begin{align}
\gamma (x, t)
&=\nonumber
\int^x
c\, q (x, t)
\begin{bmatrix}
\cos\theta(x,t)\\
\sin\theta(x,t)
\end{bmatrix}
dx\\
&=\label{explicit}
c\,
\sum_{k = 0}^M
\dfrac{\exp \left(\lambda_k x
+ \sum_{i=2}^\infty (\lambda_k^i+\lambda_k^{i-2}+a) t_i + \xi_k\right)}%
{1 + {\lambda_k}^2}
\begin{bmatrix}
\lambda_k \cos \theta + \sin\theta\\
\lambda_k \sin\theta - \cos \theta
\end{bmatrix},
\end{align}
where $\lambda_0 = \xi_0 = 0$.  Figure \ref{fig:Burgers_curve-1}, \ref{fig:Burgers_curve-2}
illustrate motion of plane curves corresponding to $M$-shock wave solutions ($M = 1, 2$,
respectively) of the Burgers equation \eqref{eqn:Burgers_with_a} with $t_i = 0 \left(i \geq
3\right)$.
\begin{rem}\label{rem:a}
The parameter $a$ originally arises as an integration constant in \eqref{compat-1}, and play a role
of rotation in the deformation of smooth curves as seen in Proposition
\ref{prop:rep_formula:smooth}.  This parameter can be formally absorbed by a suitable linear
transformation of independent variables (see, for example, \eqref{eqn:n-th_Burgers} and
\eqref{linear}). In the discrete case, however, such manipulation is not applicable since the chain
rule does not work effectively. Actually the similar parameter appears in a non-trivial manner in
the deformation of discrete curves as shown in Section \ref{sec:discrete_curve}.
\end{rem}
\begin{center}
\begin{figure}[h]
\begin{minipage}{.3\linewidth}
\includegraphics[width=\linewidth]{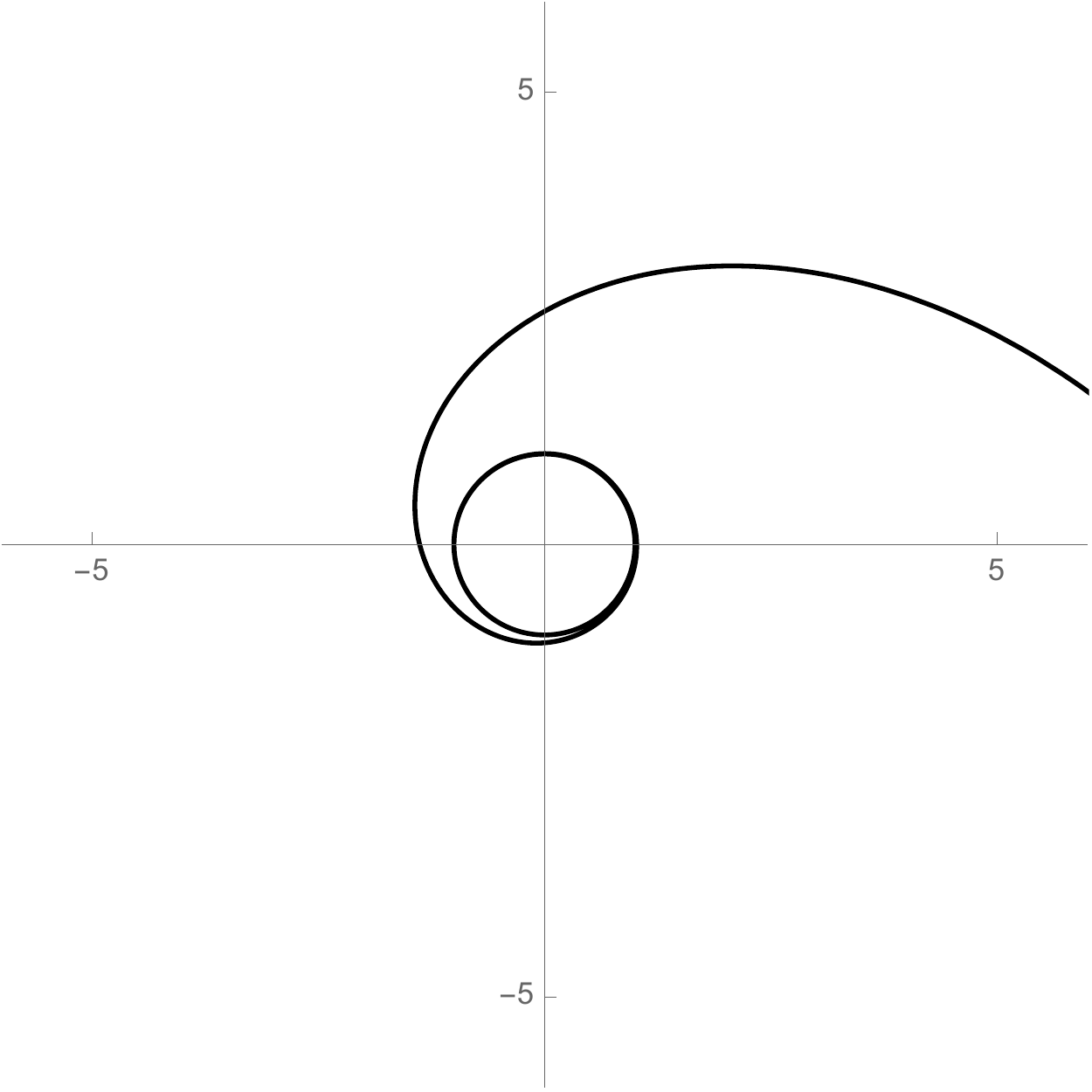}
\end{minipage}
\begin{minipage}{.3\linewidth}
\includegraphics[width=\linewidth]{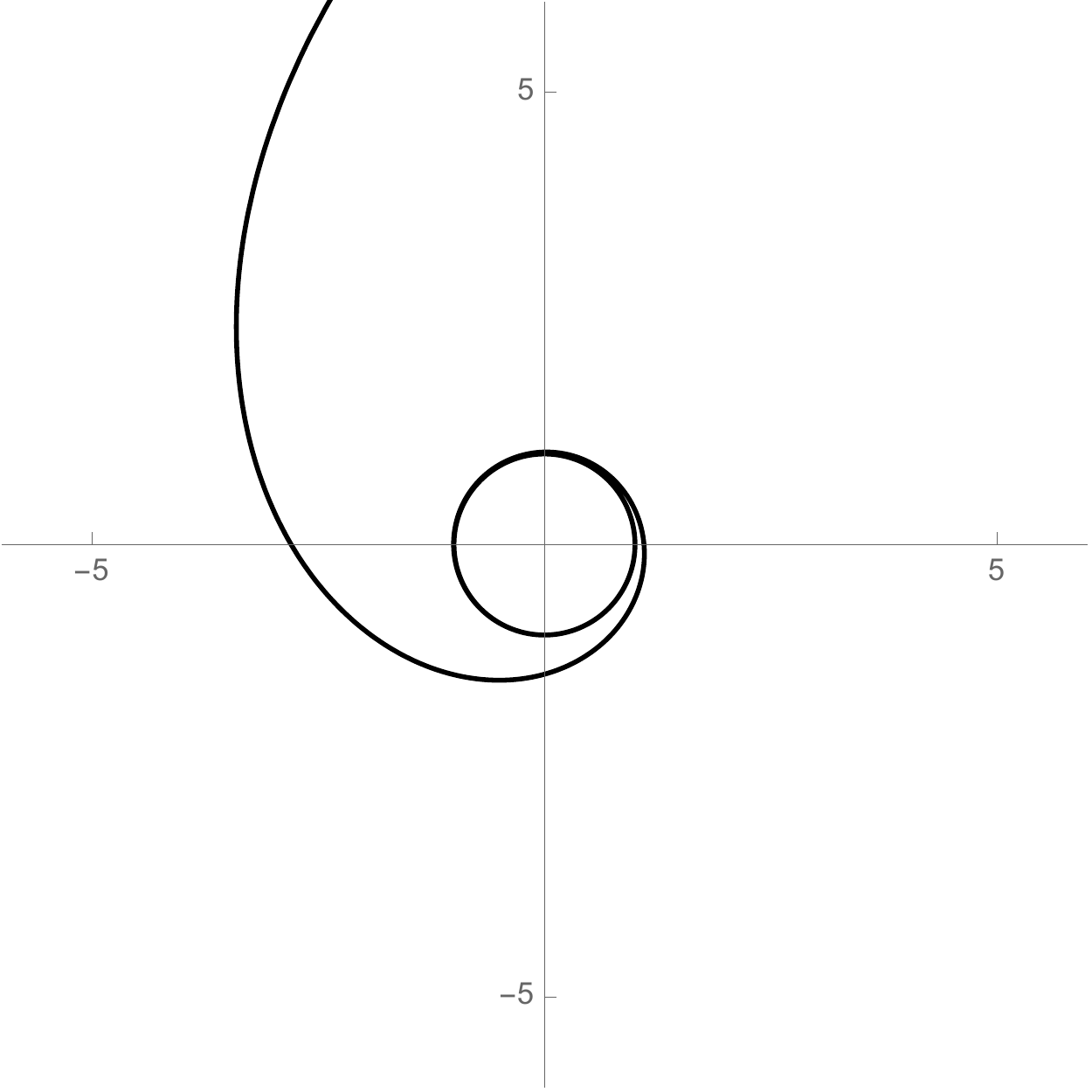}
\end{minipage}
\begin{minipage}{.3\linewidth}
\includegraphics[width=\linewidth]{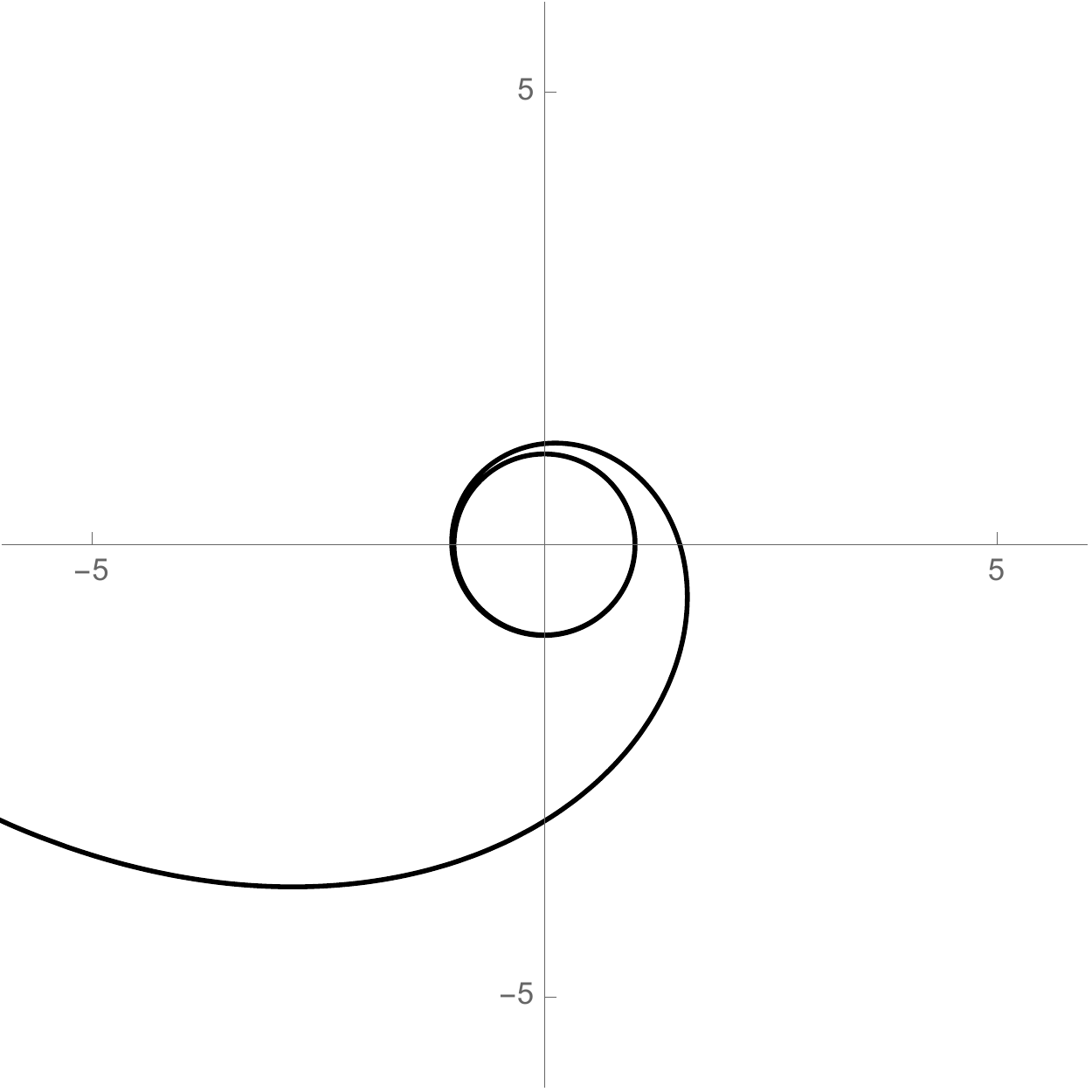}
\end{minipage}
\caption{Motion of plane curves $e^{- t_2} \gamma \left(x, t\right)$
corresponding to a 1-shock wave solution of the Burgers equation
\eqref{eqn:Burgers_with_a}. Parameters are $c = 1$,
$a = 0$,
$\lambda_1 = - 1$, $\xi_1 = 0$ and
$t_2=-8$ (left), $0$ (middle), $8$ (right).}
\label{fig:Burgers_curve-1}
\end{figure}
\end{center}
\begin{center}
\begin{figure}[h]
\begin{minipage}{.3\linewidth}
\includegraphics[width=\linewidth]{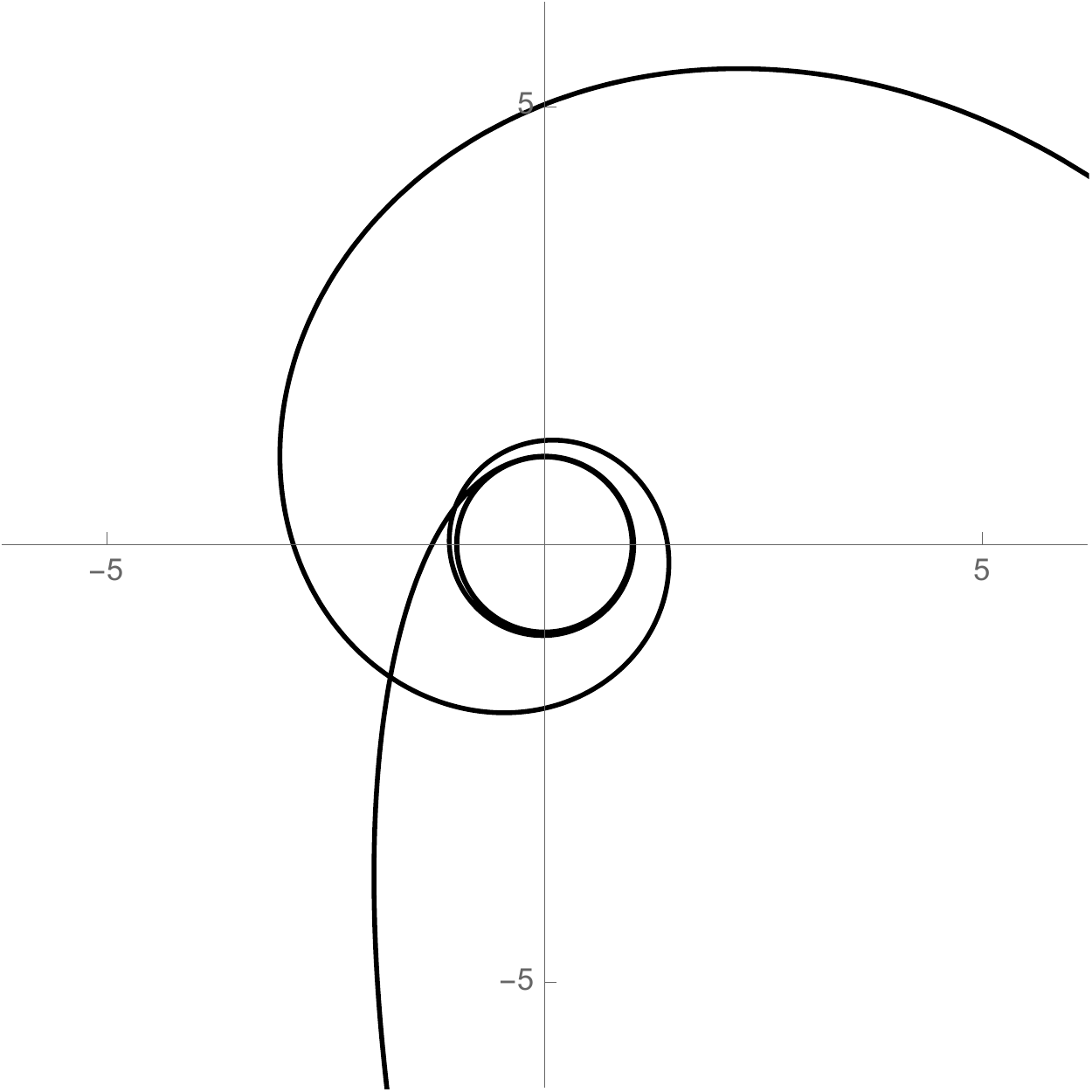}
\end{minipage}
\begin{minipage}{.3\linewidth}
\includegraphics[width=\linewidth]{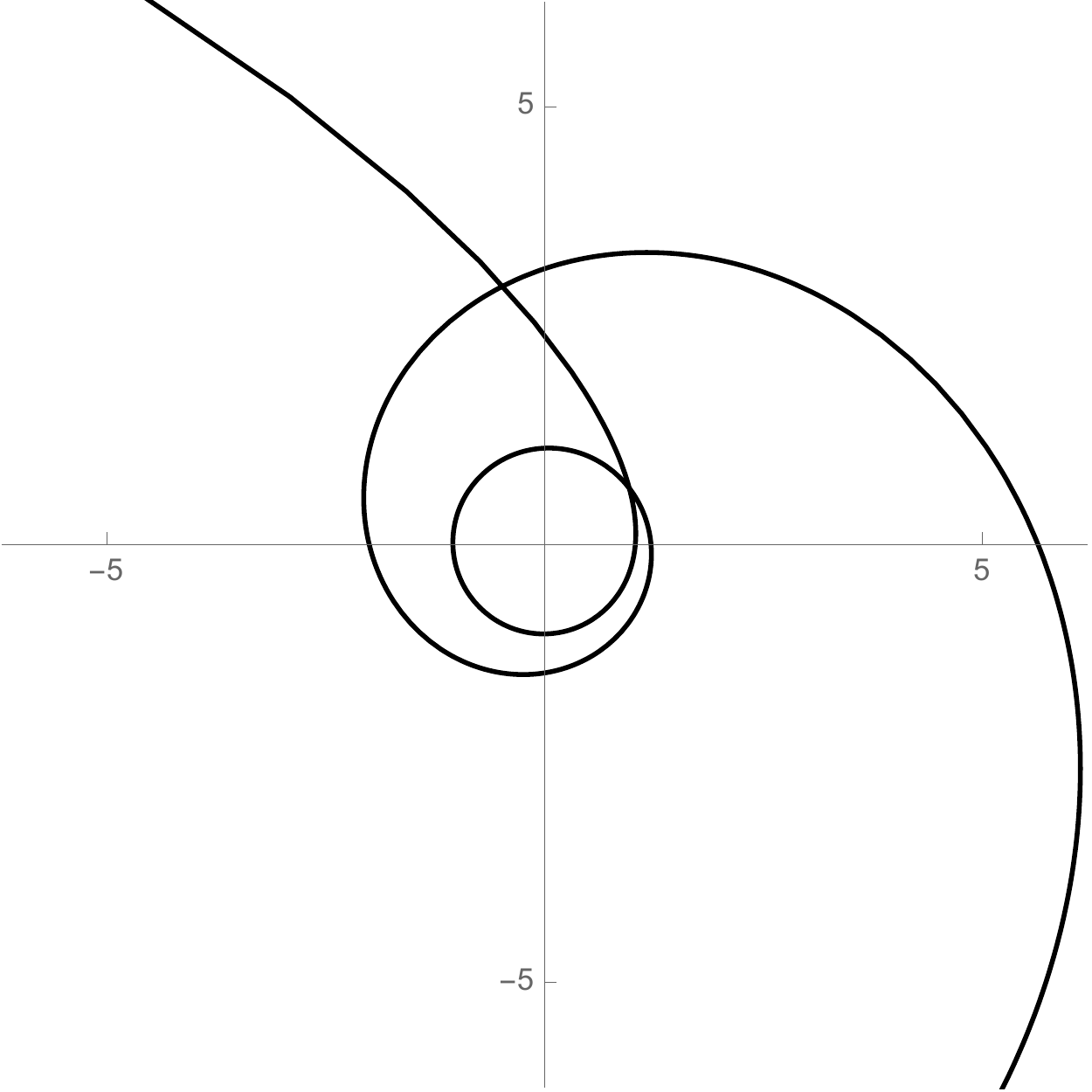}
\end{minipage}
\begin{minipage}{.3\linewidth}
\includegraphics[width=\linewidth]{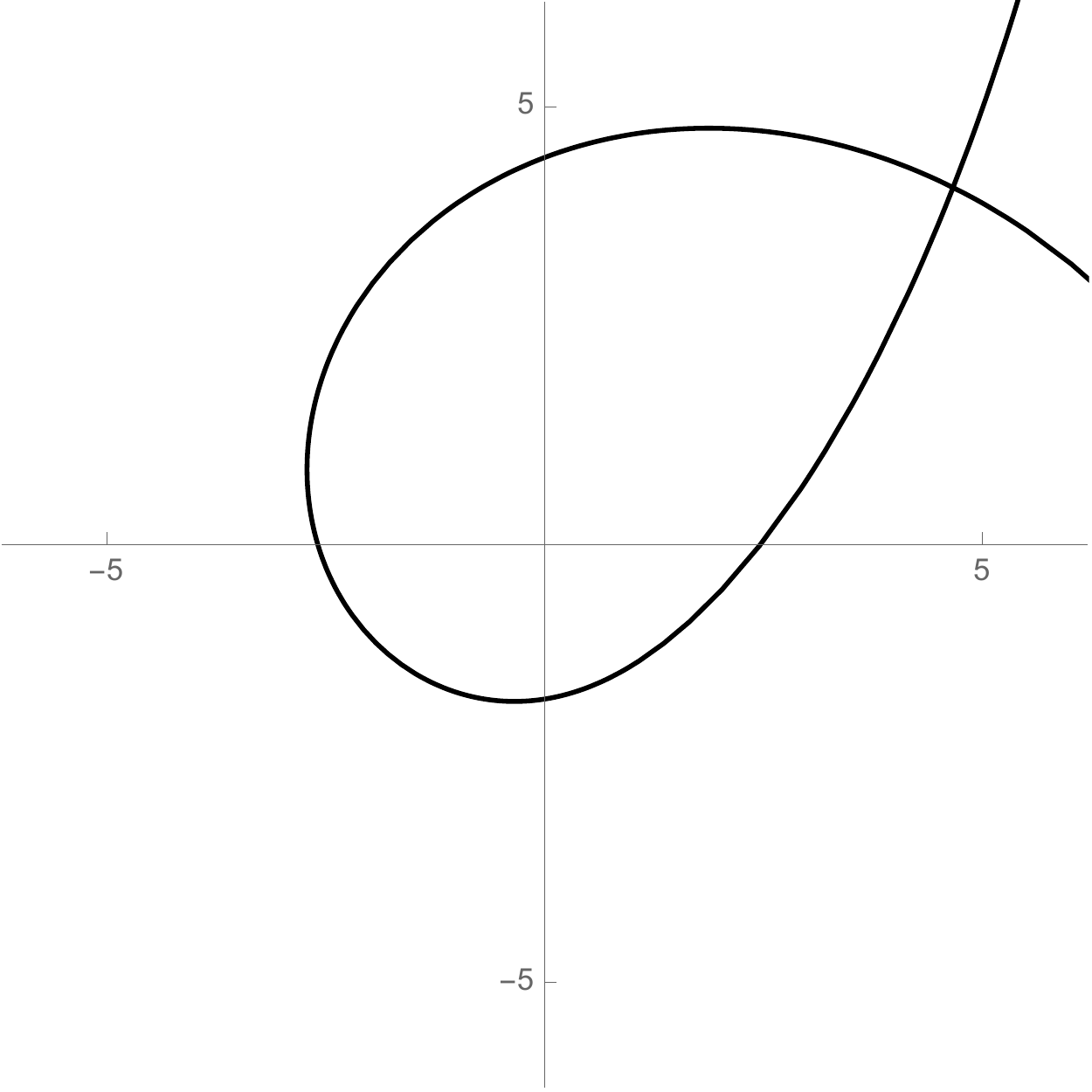}
\end{minipage}
\caption{Motion of plane curves $e^{- t_2} \gamma \left(x, t\right)$
corresponding to a 2-shock wave solution of the Burgers equation
\eqref{eqn:Burgers_with_a}. Parameters are $c = 1$,
$a = \pi/4$,
$\lambda_1 = - 1/2$,
$\lambda_2 = 4$,
$\xi_1 = \xi_2 = 0$ and 
$t_2=-12$ (left), $-2$ (middle), $-1/10$ (right).}
\label{fig:Burgers_curve-2}
\end{figure}
\end{center}

\section{Isogonal deformation of discrete curves}\label{sec:discrete_curve}
In this section, we consider the discrete deformation of discrete plane curves under the similarity
geometry, which naturally gives rise to the discrete Burgers equation and its hierarchy. For the 
definition and fundamental properties of the discrete Burgers hierarchy, the readers may refer to 
Appendix \ref{sec:discrete_Burgers}.

\subsection{Discrete curve}

For a map $\gamma\colon \Z \to \R^2,\, n \mapsto \gamma_n$, if any
consecutive three points $\gamma_{n + 1}, \gamma_n, \gamma_{n - 1}$ are
not colinear, we call $\gamma$ a {\em discrete plane curve}. For a
discrete plane curve $\gamma$, we denote by $q_n$ the distance between
the adjacent vertices
\begin{equation*}
q_n = \left|\gamma_{n + 1} - \gamma_n\right|.
\end{equation*}
We introduce $\kappa_n$ as the angle between the two vectors 
$\gamma_n - \gamma_{n - 1},\ \gamma_{n  + 1} - \gamma_n$.  
More precisely, we define $\kappa\colon \Z \to \left(0, 2 \pi\right)$ by
\begin{equation*}
\dfrac{\gamma_{n + 1} - \gamma_n}%
{q_n}
=
R \left(\kappa_n\right)
\dfrac{\gamma_n - \gamma_{n - 1}}%
{q_{n - 1}},
\end{equation*}
where $R$ is the rotation matrix
\begin{equation*}
R \left(x\right)
=
\begin{bmatrix}
\cos x & - \sin x\\
\sin x & \cos x
\end{bmatrix}.
\end{equation*}
Moreover, we put
\begin{equation*}
T_n = \gamma_{n + 1} - \gamma_n,\quad
N_n = R \left(\dfrac{\pi}{2}\right) T_n,
\end{equation*}
and introduce the map $\phi\colon \Z \to \CO \left(2\right)$ by
\begin{equation*}
\phi_n
= \left[T_n, N_n\right]
= q_n
\begin{bmatrix}
\dfrac{\gamma_{n + 1} - \gamma_n}{q_n},\ 
R \left(\dfrac{\pi}{2}\right)
\dfrac{\gamma_{n + 1} - \gamma_n}{q_n}
\end{bmatrix}.
\end{equation*}
We call the map $\phi$ the {\em similarity Frenet frame} of the discrete plane curve $\gamma$.
\begin{prop}\label{prop:frame-n}
The similairity Frenet frame $\phi$ satisfies the linear difference equation
\begin{equation}\label{eqn:X}
\phi_{n + 1} = \phi_n X_n,\quad
X_n = \dfrac{q_{n + 1}}{q_n} R \left(\kappa_{n + 1}\right).
\end{equation}
\end{prop}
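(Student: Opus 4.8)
The plan is to exploit the fact that the similarity Frenet frame $\phi_n = [T_n, N_n]$ is conformal, i.e.\ $\phi_n \in \CO(2)$, which lets any rotation be factored through it freely. First I would observe that since $N_n = R(\pi/2) T_n$, the two columns of $\phi_n$ are orthogonal and both of length $q_n$; writing the unit tangent $T_n/q_n$ as $(\cos\psi_n, \sin\psi_n)$ for some angle $\psi_n$, one obtains the clean form $\phi_n = q_n R(\psi_n)$. This is the key structural observation: the frame is, up to the scalar factor $q_n$, a pure rotation by the tangent angle $\psi_n$.

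Next I would extract the angle recursion directly from the definition of $\kappa$. Shifting the defining relation $T_n/q_n = R(\kappa_n)\,T_{n-1}/q_{n-1}$ by one step gives $T_{n+1}/q_{n+1} = R(\kappa_{n+1})\,T_n/q_n$, and reading off the argument of the unit tangent yields $\psi_{n+1} = \psi_n + \kappa_{n+1}$ (modulo the harmless $2\pi$ ambiguity, since only $R(\psi_n)$ enters).

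Finally I would substitute and invoke additivity of rotations:
\[
\phi_{n+1} = q_{n+1}\, R(\psi_{n+1}) = q_{n+1}\, R(\psi_n)\, R(\kappa_{n+1}) = \frac{q_{n+1}}{q_n}\, \phi_n\, R(\kappa_{n+1}) = \phi_n X_n,
\]
where in the third equality I replace $q_n R(\psi_n)$ by $\phi_n$ and let the scalar $q_{n+1}/q_n$ pass the matrix. This is exactly \eqref{eqn:X}.

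There is no substantive obstacle here; the only points requiring care are bookkeeping. I must get the index shift right so that the angle carrying $T_n$ to $T_{n+1}$ is $\kappa_{n+1}$ rather than $\kappa_n$, and I must confirm that the factorization $R(\psi_n+\kappa_{n+1}) = R(\psi_n)R(\kappa_{n+1})$ together with the commuting scalar genuinely produces $\phi_n X_n$ and not $X_n \phi_n$ — which it does precisely because $\phi_n = q_n R(\psi_n)$ is itself a rotation (times a scalar) and rotations commute. If one prefers to avoid introducing $\psi_n$, the same conclusion follows by checking the two columns of $\phi_{n+1}$ against those of $\phi_n X_n$ separately, using that conformal matrices of the form $aI + bR(\pi/2)$ commute with every rotation.
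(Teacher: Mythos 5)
Your proof is correct and rests on exactly the same fact as the paper's: the frame $\phi_n$ is a scalar multiple of a rotation matrix, so rotations pass through it freely. The paper phrases this by first deriving $\phi_{n+1} = X_n\,\phi_n$ from the shifted defining relation for $\kappa_{n+1}$ and then commuting $R(\kappa_{n+1})$ with $\phi_n$, while you make the identical mechanism explicit via the parametrization $\phi_n = q_n R(\psi_n)$ and angle additivity; the difference is purely presentational.
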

\begin{proof}
Since $T$ satisfies
\begin{equation*}
\dfrac{1}{q_{n + 1}} T_{n + 1}
=
R \left(\kappa_{n + 1}\right)
\dfrac{1}{q_n} T_n,
\end{equation*}
we have
\begin{equation*}
\phi_{n + 1}
=
\left[T_{n + 1}, N_{n + 1}\right]
=
\dfrac{q_{n + 1}}{q_n}
R \left(\kappa_{n + 1}\right)
\left[T_n, N_n\right]
= X_n \phi_n.
\end{equation*}
Since the rotation matrix $R \left(\kappa_{n + 1}\right)$ and 
the matrix $\phi_n$ commute with each other, the statement is proved.
\end{proof}

\subsection{Isogonal Deformation}

\subsubsection{General settings}

We next consider the deformation of the curves. We write the deformed
curve as $\overline{\gamma}$, and we also express the data associated
with $\overline{\gamma}$ by putting $\overline{\phantom{\gamma}}$.
For instance, we define the function
$\overline{\kappa}\colon \Z \to \left(0, 2 \pi\right)$ by
\begin{equation}\label{def:overline-kappa}
\dfrac{\overline{\gamma}_{n + 1} - \overline{\gamma}_n}%
{\overline{q}_n}
=
R \left(\overline{\kappa}_n\right)
\dfrac{\overline{\gamma}_n - \overline{\gamma}_{n - 1}}%
{\overline{q}_{n-1}},\quad
\overline{q}_n=\left|\overline{\gamma}_{n + 1} - \overline{\gamma}_n\right|.
\end{equation}
\begin{lemma}\label{lemma:angle-preserving}
The necessary and sufficient condition for the deformation $\gamma
 \mapsto \overline{\gamma}$ being isogonal, namely, $\overline{\kappa} =
 \kappa$, is that there exist a positive-valued function $H$ and a
 constant $a$ satisfying
\begin{equation}\label{angle-preserving}
\overline{T}_{n}
=
H_n
\phi_n
\begin{bmatrix}
\cos a\\
\sin a
\end{bmatrix}.
\end{equation}
\end{lemma}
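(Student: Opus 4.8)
The plan is to recast the condition \eqref{angle-preserving} geometrically and then read off both directions of the equivalence from the commutativity of planar rotations. First I would observe that, since $N_n = R(\pi/2)T_n$, expanding the matrix--vector product gives the identity
\begin{equation*}
\phi_n \begin{bmatrix} \cos a \\ \sin a \end{bmatrix} = \cos a\, T_n + \sin a\, N_n = R(a)\,T_n,
\end{equation*}
so that \eqref{angle-preserving} is equivalent to $\overline{T}_n = H_n R(a)\,T_n$. Taking norms, using $H_n>0$ and that $R(a)$ is an isometry, gives $\overline{q}_n = H_n q_n$, and hence the deformed unit tangent satisfies $\overline{T}_n/\overline{q}_n = R(a)\,(T_n/q_n)$: the deformation rotates every unit tangent by the same fixed angle $a$. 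This reformulation is the conceptual heart of the argument, and everything after it is short.

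For sufficiency I would assume \eqref{angle-preserving}, hence the unit-tangent relation above, and compute $\overline{\kappa}_n$ from its definition \eqref{def:overline-kappa}. Combining it with the original turning relation $T_n/q_n = R(\kappa_n)(T_{n-1}/q_{n-1})$ that underlies Proposition \ref{prop:frame-n}, and using that any two planar rotations commute, I would obtain
\begin{equation*}
\frac{\overline{T}_n}{\overline{q}_n} = R(a)R(\kappa_n)\frac{T_{n-1}}{q_{n-1}} = R(\kappa_n)R(a)\frac{T_{n-1}}{q_{n-1}} = R(\kappa_n)\frac{\overline{T}_{n-1}}{\overline{q}_{n-1}},
\end{equation*}
and comparing with \eqref{def:overline-kappa} conclude $\overline{\kappa}_n = \kappa_n$.

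For necessity I would run the argument backwards. Since consecutive vertices are distinct, each $T_n$ and $\overline{T}_n$ is nonzero, so setting $H_n = \overline{q}_n/q_n>0$ there is a uniquely determined angle $a_n$ (mod $2\pi$) with $\overline{T}_n = H_n R(a_n) T_n$; the whole task then reduces to showing that $a_n$ is independent of $n$. Writing $\overline{\kappa}_n=\kappa_n$ in terms of unit tangents and substituting the two expressions for $\overline{T}_n/\overline{q}_n$ --- one from its own definition \eqref{def:overline-kappa}, one from $\overline{T}_n = H_n R(a_n)T_n$ --- I would again commute the rotations and cancel the common nonzero vector $R(\kappa_n)(T_{n-1}/q_{n-1})$ to reach $R(a_n)=R(a_{n-1})$, i.e. $a_n\equiv a_{n-1}$. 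Thus $a_n=a$ is constant, which is precisely \eqref{angle-preserving}.

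The calculations themselves are routine; the only delicate point is the angle bookkeeping --- checking that the rotation carrying one unit tangent to the next is well defined and that equality of rotation matrices forces equality of the associated angles in the relevant range. The essential structural input, and the step I would flag as the crux, is the two-dimensional commutativity $R(a)R(\kappa)=R(\kappa)R(a)$: it is exactly this that lets the fixed deformation angle $a$ slide past the curve's turning angle $\kappa_n$ without disturbing it, which is why isogonality is equivalent to a rigid rotation of the tangents.
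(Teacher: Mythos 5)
Your proof is correct and follows essentially the same route as the paper: both decompose $\overline{T}_n = H_n R(a_n) T_n$ (equivalently, write \eqref{angle-preserving} as a scaled rotation of $T_n$) and reduce isogonality to the constancy of $a_n$. The paper states the final equivalence without detail, whereas you supply the missing justification --- the commutativity of planar rotations and the cancellation against the nonzero tangent vector --- so your write-up is a fleshed-out version of the paper's argument, not a different one.
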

\begin{proof}
Since both $\overline{T}$ and $T$ are planar vectors,
it is obvious that there exist a positive-valued function $H$ and
an angle $a$ such that
\begin{equation*}
\overline{T}_n = H_n R \left(a_n\right) T_n
= H_n \phi_n
\begin{bmatrix}
\cos a_n\\
\sin a_n
\end{bmatrix}.
\end{equation*}
Therefore the equality $\overline{\kappa} = \kappa$ holds
if and only if the angle $a$ is independent of $n$.
\end{proof}
\begin{prop}\label{prop:deformation1}
We fix $\delta\in\mathbb{R}_{>0}$, $a,\, f_0,\, g_0\in\mathbb{R}$ and a
 positive-valued function $H$.
We introduce the functions $f,\, g$ by the recursion relation
\begin{equation}\label{eqn:recursion_fg}
\begin{bmatrix}
f_{n + 1}\\
g_{n + 1}
\end{bmatrix}
=
\dfrac{1}{\delta}
\dfrac{q_n}{q_{n + 1}}
R \left(- \kappa_{n + 1}\right)
\begin{bmatrix}
1 + \delta f_n  - H_n \cos a\\
\delta g_n -  H_n \sin a
\end{bmatrix},
\end{equation}
and define the deformation $\gamma \mapsto \overline{\gamma}$ by
\begin{equation}\label{d-motion-1step}
\overline{\gamma}_n
=
\gamma_n
- \delta \left(f_n T_n + g_n N_n\right).
\end{equation}
Then we have the following:
\begin{enumerate}
\item
The deformation is isogonal. Namely, for the angle
$\overline{\kappa}_n$ defined by \eqref{def:overline-kappa}, we have
$\overline{\kappa}_n = \kappa_n$.

\item
The similarity Frenet frame $\overline{\phi}$ of the discrete curve $\overline{\gamma}$
can be expresed in terms the frame $\phi$ of $\gamma$ as
\begin{equation*}
\overline{\phi}_n
= \phi_n Y_n,\quad
Y_n = H_n R \left(a\right).
\end{equation*}
\end{enumerate}
\end{prop}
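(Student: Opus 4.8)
The plan is to reduce part~(1) to Lemma~\ref{lemma:angle-preserving}: I will compute the deformed tangent vector $\overline{T}_n = \overline{\gamma}_{n+1} - \overline{\gamma}_n$ explicitly and show that it takes exactly the form \eqref{angle-preserving} with the prescribed constant $a$. Part~(2) then follows almost immediately, since $\phi_n$ is a scaled rotation and is therefore determined by a single column together with the factor $R(\pi/2)$.

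First I would rewrite the deformation \eqref{d-motion-1step} in frame notation as $\overline{\gamma}_n = \gamma_n - \delta\,\phi_n \left[\begin{smallmatrix} f_n \\ g_n \end{smallmatrix}\right]$, using $f_n T_n + g_n N_n = \phi_n \left[\begin{smallmatrix} f_n \\ g_n \end{smallmatrix}\right]$. Taking the forward difference gives
\[
\overline{T}_n
= T_n
- \delta\,\phi_{n+1}\begin{bmatrix} f_{n+1} \\ g_{n+1}\end{bmatrix}
+ \delta\,\phi_n\begin{bmatrix} f_n \\ g_n\end{bmatrix}.
\]
The crucial manipulation is on the middle term. Inserting $\phi_{n+1} = \phi_n X_n = \phi_n \tfrac{q_{n+1}}{q_n} R(\kappa_{n+1})$ from Proposition~\ref{prop:frame-n} and then the recursion \eqref{eqn:recursion_fg}, the rotation--scaling factor $\tfrac{q_{n+1}}{q_n} R(\kappa_{n+1})$ cancels exactly against the factor $\tfrac{1}{\delta}\tfrac{q_n}{q_{n+1}} R(-\kappa_{n+1})$ appearing in the recursion, leaving
\[
\delta\,\phi_{n+1}\begin{bmatrix} f_{n+1}\\ g_{n+1}\end{bmatrix}
= \phi_n \begin{bmatrix} 1 + \delta f_n - H_n\cos a\\ \delta g_n - H_n\sin a\end{bmatrix}.
\]
Substituting this back, writing $T_n = \phi_n \left[\begin{smallmatrix} 1 \\ 0 \end{smallmatrix}\right]$, and collecting the three terms under the common factor $\phi_n$, the contributions $1$, $\delta f_n$ and $\delta g_n$ all cancel, and I am left with $\overline{T}_n = H_n\,\phi_n\left[\begin{smallmatrix}\cos a \\ \sin a\end{smallmatrix}\right]$. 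Since $H$ is positive-valued and $a$ is a constant independent of $n$, Lemma~\ref{lemma:angle-preserving} gives $\overline{\kappa}_n = \kappa_n$ at once, which proves~(1).

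For part~(2), I would use that each $\phi_n \in \CO(2)$ is in fact a scaled rotation, $\phi_n = q_n R(\alpha_n)$ for a suitable angle $\alpha_n$, so that $\phi_n$ commutes with $R(a)$. From the formula for $\overline{T}_n$ just obtained and $\overline{N}_n = R(\pi/2)\overline{T}_n$, assembling the two columns of the frame yields $\overline{\phi}_n = [\overline{T}_n,\overline{N}_n] = H_n\,\phi_n R(a) = \phi_n Y_n$ with $Y_n = H_n R(a)$, as asserted.

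I expect the only real obstacle to be organizational rather than computational: one must substitute the frame relation of Proposition~\ref{prop:frame-n} \emph{before} the recursion \eqref{eqn:recursion_fg} in order to see the cancellation of $R(\pm\kappa_{n+1})$ and of the $q$-ratios. Indeed, the recursion \eqref{eqn:recursion_fg} is evidently reverse-engineered precisely so that $\overline{T}_n$ lands in the form demanded by Lemma~\ref{lemma:angle-preserving}; once this is recognized, both statements reduce to short and routine algebra.
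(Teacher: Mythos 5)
Your proposal is correct and follows essentially the same route as the paper: compute $\overline{T}_n$ from \eqref{d-motion-1step} using the frame relation \eqref{eqn:X} and the recursion \eqref{eqn:recursion_fg}, observe that it collapses to $H_n\,\phi_n\bigl[\begin{smallmatrix}\cos a\\ \sin a\end{smallmatrix}\bigr]$ so that Lemma \ref{lemma:angle-preserving} gives $\overline{\kappa}=\kappa$, and then assemble $\overline{\phi}_n=[\overline{T}_n,\,R(\pi/2)\overline{T}_n]=H_n\phi_n R(a)$. The only difference is cosmetic: you perform the cancellation in matrix--vector form, while the paper writes out the same computation in components.
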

\begin{proof}
We compute the difference of $\overline{\gamma}$ by using
 \eqref{d-motion-1step}, \eqref{eqn:X} and
\eqref{eqn:recursion_fg}
\begin{align*}
\overline{T}_n =
\overline{\gamma}_{n+1} - \overline{\gamma}_n =\;&
\gamma_{n+1} - \delta \left(f_{n+1} T_{n+1} + g_{n+1} N_{n+1}\right)
- \gamma_{n} + \delta \left(f_{n} T_{n} + g_{n} N_{n}\right)\\
=\;& \left\{1 - \delta\frac{q_{n+1}}{q_n} \left(f_{n+1}\cos\kappa_{n+1} - g_{n+1} \sin\kappa_{n+1}\right) + \delta f_{n}\right\} T_n\\
&+ \delta \left\{- \frac{q_{n+1}}{q_n} \left(f_{n+1}\sin\kappa_{n+1}
+ g_{n+1} \cos\kappa_{n+1}\right) + g_n \right\} N_n\\
=\;& H_n \left(\cos a\, T_n + \sin a\, N_n\right).
\end{align*}
Then we have \eqref{angle-preserving}, which means
$\overline{\kappa} = \kappa$.
The frame of $\overline{\gamma}$ satisfies
\begin{equation*}
\overline{\phi}_n
= \left[\overline{T}_n,\,R \left(\dfrac{\pi}{2}\right) \overline{T}_n\right]
= H_n \phi_n R \left(a\right),
\end{equation*}
which completes the proof.
\end{proof}
Repeating the deformation in Proposition \ref{prop:deformation1},
we have the sequence of isogonal deformations of discrete plane curves
$\gamma^0 = \gamma$,
$\gamma^1 = \overline{\gamma}$, $\ldots$,
$\gamma^m = \overline{\gamma^{m - 1}}$, $\ldots$.
We write
\begin{gather*}
q_n^m = \left|\gamma_{n+1}^m-\gamma_n^m\right|,\\
T^m_n = \gamma^m_{n + 1} - \gamma^m_n,\quad
N^m_n = R \left(\dfrac{\pi}{2}\right) T^m_n,\\
\dfrac{\gamma^m_{n + 1} - \gamma^m_n}{q_n^m}
= R \left(\kappa^m_n\right)
\dfrac{\gamma^m_n - \gamma^m_{n - 1}}{q_{n-1}^m}.
\end{gather*}
\begin{prop}
Let $\kappa$ be the angles associated with the discrete curve $\gamma^0$.  For each $m\in\Z$, we fix
$\delta_m>0$, real numbers $a_m,\, f^m_0,\, g^m_0$, and the positive-valued function $H^m$, and we
introduce the functions $f^m, g^m$ by the recursion relation
\begin{equation}\label{eqn:recursion_fg2}
\begin{split}
\begin{bmatrix}
f_{n + 1}^m\\
g_{n + 1}^m
\end{bmatrix}
&=
\dfrac{1}{\delta_m}
\dfrac{q_n^m}{q_{n + 1}^m}
R \left(- \kappa_{n + 1}\right)
\begin{bmatrix}
1 + \delta_m f_n^m  - H_n^m \cos a_m\\
\delta_m g_n^m -  H_n^m \sin a_m
\end{bmatrix}.
\end{split}
\end{equation}
Then defining the discrete curves $\gamma^m$ by
\begin{gather}
\gamma^{m + 1}_n
=\label{def:d-motion}
\gamma^m_n
- \delta_m \left(f^m_n T^m_n + g^m_n N^m_n\right),
\end{gather}
we have the following:
\begin{enumerate}
\item
For each $m$, it holds that $\kappa^m = \kappa^0 = \kappa$.
\item
The similarity Frenet frames $\phi^m$, $\phi^{m + 1}$ satisfy the system
of linear difference equations
\begin{align}
\phi^m_{n + 1}
&=\label{phi-L}
\phi^m_n X^m_n,\quad
X^m_n
= \dfrac{q^m_{n + 1}}{q^m_n}
R \left(\kappa_{n + 1}\right),\\
\phi^{m + 1}_n
&=\label{phi-M}
\phi^m_n Y^m_n,\quad
Y^m_n
= H^m_n R \left(a_m\right).
\end{align}
\item
The compatibility condition of the system of linear difference equation
\eqref{phi-L}--\eqref{phi-M} is
\begin{equation}\label{d-consistency}
\dfrac{q^{m + 1}_{n + 1}}{q^{m + 1}_n}
\dfrac{q^m_n}{q^m_{n + 1}}
=
\dfrac{H^m_{n + 1}}{H^m_n}.
\end{equation}
\end{enumerate}
\end{prop}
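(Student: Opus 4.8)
The plan is to obtain all three parts by iterating the single-step statements already in hand—Proposition \ref{prop:frame-n} for the frame equation along the curve, and Proposition \ref{prop:deformation1} for one isogonal deformation step—and then to extract the compatibility condition from the resulting pair of linear difference equations by a standard zero-curvature argument.

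For part (1) I would argue by induction on $m$, the base case $\kappa^0=\kappa$ being the definition. Assuming $\kappa^m=\kappa$, the recursion \eqref{eqn:recursion_fg2} is exactly \eqref{eqn:recursion_fg} for the curve $\gamma^m$ with data $\delta_m,\,a_m,\,f^m_0,\,g^m_0,\,H^m$, since the angle $\kappa_{n+1}$ appearing there coincides with $\kappa^m_{n+1}$ by the inductive hypothesis. Thus the deformation $\gamma^m\mapsto\gamma^{m+1}$ given by \eqref{def:d-motion} is precisely the deformation of Proposition \ref{prop:deformation1}, whose part (1) yields $\kappa^{m+1}=\kappa^m=\kappa$, closing the induction. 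This dependence on $\kappa_{n+1}$ rather than $\kappa^m_{n+1}$ is the one subtle point: the whole iteration is well-posed only because part (1) guarantees these agree at every step.

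For part (2), equation \eqref{phi-L} is Proposition \ref{prop:frame-n} applied to $\gamma^m$, with $R(\kappa^m_{n+1})$ rewritten as $R(\kappa_{n+1})$ by means of (1); equation \eqref{phi-M} is the restatement of part (2) of Proposition \ref{prop:deformation1} for the single step, giving $Y^m_n=H^m_n R(a_m)$.

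For part (3), I would compute $\phi^{m+1}_{n+1}$ along the two lattice paths. One has $\phi^{m+1}_{n+1}=\phi^{m+1}_n X^{m+1}_n=\phi^m_n Y^m_n X^{m+1}_n$ and also $\phi^{m+1}_{n+1}=\phi^m_{n+1}Y^m_{n+1}=\phi^m_n X^m_n Y^m_{n+1}$. Since $\phi^m_n\in\CO(2)$ is invertible, compatibility is equivalent to $Y^m_n X^{m+1}_n=X^m_n Y^m_{n+1}$. Substituting the explicit forms and using that planar rotations commute, both sides carry the identical rotation factor $R(\kappa_{n+1}+a_m)$, so equating the scalar prefactors $H^m_n\,q^{m+1}_{n+1}/q^{m+1}_n$ and $q^m_{n+1}/q^m_n\,H^m_{n+1}$ gives \eqref{d-consistency}. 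This final computation is routine, resting only on the commutativity of the rotation group and the invertibility of the frame; the genuine conceptual content sits in part (1), where one must verify that each step legitimately falls under Proposition \ref{prop:deformation1}.
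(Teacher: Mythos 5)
Your proposal is correct and takes exactly the route the paper intends: the paper states this proposition without any written proof, treating it as immediate from iterating Proposition \ref{prop:frame-n} and Proposition \ref{prop:deformation1}, together with the standard zero-curvature computation $Y^m_n X^{m+1}_n = X^m_n Y^m_{n+1}$ that yields \eqref{d-consistency}. Your induction on $m$ (making explicit that $\kappa^m_{n+1}=\kappa_{n+1}$ is what legitimizes applying Proposition \ref{prop:deformation1} at each step) is precisely the implicit content the paper leaves to the reader.
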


\subsubsection{Discrete Burgers flow}

Let us consider a special case where the function $\kappa$ is a constant
$\kappa_n = \epsilon$. For each $m\in\mathbb{Z}$, let $\delta_m$ be a
positive constant, and we set
\begin{gather}
a_m = 0,\quad \nonumber
f^m_0 = \dfrac{1}{\epsilon^2}\left(\dfrac{q^m_{- 1}}{q^m_0}-\cos \epsilon\right),
\quad 
g^m_0 = \dfrac{\sin \epsilon}{\epsilon^2},\\
H^m_n
=\label{def:H}
1 + \dfrac{\delta_m}{\epsilon^2}
\left(\dfrac{q^m_{n + 1}}{q^m_n}
- 2 \cos \epsilon
+ \dfrac{q^m_{n - 1}}{q^m_n}\right).
\end{gather}
Then the solution of the difference equation \eqref{eqn:recursion_fg2}
is given by
\begin{equation*}
f^m_n =  \frac{1}{\epsilon^2}\left(\dfrac{q^m_{n - 1}}{q^m_n} - \cos \epsilon\right)
,\quad
g^m_n = \dfrac{\sin \epsilon}{\epsilon^2}.
\end{equation*}
Defining the deformation of the discrete curve by \eqref{def:d-motion} by using this solution, the
compatibility condition \eqref{d-consistency} yields that the ratio $u_n^m = q_{n+1}^m/q_n^m$ obeys
(a variant of) the discrete Burgers equation (see \eqref{d-burgers-higher} with $i=2$)
\begin{equation}\label{eqn:autonomous_Burgers1}
\dfrac{u^{m + 1}_n}{u^m_n}
=
\dfrac{1 + \dfrac{\delta_m}{\epsilon^2}
\left(u^m_{n + 1} - 2 \cos \epsilon
+ \dfrac{1}{u^m_n}\right)}%
{1 + \dfrac{\delta_m}{\epsilon^2}
\left(u^m_n - 2 \cos \epsilon
+ \dfrac{1}{u^m_{n - 1}}\right)}.
\end{equation}
The length $q_n^m = \left|\gamma^m_{n + 1} - \gamma^m_n\right|$
satisfy the linear difference equation
\begin{equation*}
 \frac{q_{n}^{m+1}-q_n^m}{\delta_m} = \frac{q_{n+1}^m - 2 q_n^m \cos\epsilon + q_{n-1}^m}{\epsilon^2}.
\end{equation*}

\begin{rem}\label{rem:positivity_H1}
The function $H_n^m$ defined by \eqref{def:H} is not necessarily
 positive in general.
However, it is possible to make it positive by choosing $\delta_m>0$
 appropriately as follows.
We put
\begin{equation*}
Q_m = \min_n \dfrac{q^m_{n + 1} + q^m_{n - 1}}{q^m_n} = \min_n \left(u_{n+1}^m+\frac{1}{u_{n-1}^m}\right).
\end{equation*}
If $Q_m \geq 2 \cos \epsilon$, then we have for arbitrary $n$
\begin{equation*}
\dfrac{q^m_{n + 1} + q^m_{n - 1}}{q^m_n} - 2 \cos \epsilon
\geq Q_m - 2 \cos \epsilon
\geq 0,
\end{equation*}
which gives H$_n^m>0$. If $Q_m < 2 \cos \epsilon$, then choose $\delta_m$ as
\begin{equation*}
\dfrac{\epsilon^2}{2 \cos \epsilon - Q_m} > \delta_m>0,
\end{equation*}
then $H_n^m$ becomes positive. In fact, we have for arbitrary $n$
\begin{equation*}
H^m_n
=
1 + \dfrac{\delta_m}{\epsilon^2}
\left(\dfrac{q^m_{n + 1} + q^m_{n - 1}}{q^m_n}
- 2 \cos \epsilon\right)
\geq
1 + \dfrac{\delta_m}{\epsilon^2}
\left(Q_m - 2 \cos \epsilon\right)
> 0.
\end{equation*}
\end{rem}

\subsubsection{Discrete Burgers flow of higher order}

Let us write down the deformation equation corresponding to
\eqref{eqn:udot}. From \eqref{eqn:recursion_fg2} we have that
\begin{align*}
u_n^m f_{n+1}^m &= f_n^m\cos\kappa_{n+1} + g_n^m\sin\kappa_{n+1} + \frac{\cos\kappa_{n+1} - H_n^m\cos(\kappa_{n+1}-a_m)}{\delta_m},\\
u_n^m g_{n+1}^m &= -f_n^m\sin\kappa_{n+1} + g_n^m\cos\kappa_{n+1} - \frac{\sin\kappa_{n+1} - H_n^m\sin(\kappa_{n+1}-a_m)}{\delta_m}.
\end{align*}
We solve the second equation in terms of $f_n^m$ and substitute it into
the first equation with $n\mapsto n-1$ so as to obtain that
\begin{equation*}
\begin{split}
\frac{\sin(\kappa_{n+1}-a_m)}{\sin\kappa_{n+1}}H_n^m
&+ \frac{\sin a_m}{\sin\kappa_{n}} \frac{1}{u_{n-1}^m}H_{n-1}^m\\
= 1 + \delta_m &\left\{\frac{1}{\sin\kappa_{n+1}}u_n^m g_{n+1}^m 
-\left(\frac{\cos\kappa_{n+1}}{\sin\kappa_{n+1}} + \frac{\cos\kappa_{n} }{\sin\kappa_{n}} \right) g_n^m + \frac{1}{\sin\kappa_{n}} \frac{1}{u_{n-1}^m}g_{n-1}^m\right\}.
\end{split}
\end{equation*}
Then the compatibility condition \eqref{d-consistency} gives
\begin{equation}\label{eqn:u_g1}
\begin{minipage}{0.85\textwidth}
\begin{displaymath}
\begin{split} 
&\dfrac{u_n^{m+1}}{u_n^m}\frac{\dfrac{\sin(\kappa_{n+2}-a_m)}{\sin\kappa_{n+2}}  + \dfrac{\sin a_m}{\sin\kappa_{n+1}} \dfrac{1}{u_{n}^{m+1}}}
{\dfrac{\sin(\kappa_{n+1}-a_m)}{\sin\kappa_{n+1}}  + \dfrac{\sin a_m}{\sin\kappa_{n}} \dfrac{1}{u_{n-1}^{m+1}}}\nonumber\\[2mm]
&\hskip40pt 
=\frac{1 +  \delta_m\left\{\dfrac{1}{\sin\kappa_{n+2}}u_{n+1}^m g_{n+2}^m 
- \left(\dfrac{\cos\kappa_{n+2}}{\sin\kappa_{n+2}} + \dfrac{\cos\kappa_{n+1} }{\sin\kappa_{n+1}} \right) g_{n+1}^m 
 + \dfrac{1}{\sin\kappa_{n+1}} \dfrac{1}{u_{n}^m}g_{n}^m\right\}}
{1 +  \delta_m\left\{\dfrac{1}{\sin\kappa_{n+1}}u_n^m g_{n+1}^m 
- \left(\dfrac{\cos\kappa_{n+1}}{\sin\kappa_{n+1}} + \dfrac{\cos\kappa_{n} }{\sin\kappa_{n}} \right) g_n^m 
 + \dfrac{1}{\sin\kappa_{n}} \dfrac{1}{u_{n-1}^m}g_{n-1}^m\right\}},
\end{split} 
\end{displaymath}
\end{minipage}
\end{equation}
or equivalently
\begin{equation}\label{eqn:u_g2}
\begin{minipage}{0.85\textwidth}
\begin{footnotesize}
\begin{displaymath}
\begin{split}
&\dfrac{u_n^{m+1}}{u_n^m}\frac{\dfrac{\sin(\kappa_{n+2}-a_m)}{\sin\kappa_{n+2}}  + \dfrac{\sin a_m}{\sin\kappa_{n+1}} \dfrac{1}{u_{n}^{m+1}}}
{\dfrac{\sin(\kappa_{n+1}-a_m)}{\sin\kappa_{n+1}}  + \dfrac{\sin a_m}{\sin\kappa_{n}} \dfrac{1}{u_{n-1}^{m+1}}}\nonumber\\[2mm]
&
=
\frac{1 +  \delta_m\left(\dfrac{1}{\sin\kappa_{n+2}}u_{n+1}^m e^{\partial_n}
- \dfrac{1}{\sin\kappa_{n+2}} - \dfrac{1}{\sin\kappa_{n+1}}
 + \dfrac{1}{\sin\kappa_{n+1}} \dfrac{1}{u_{n}^m}e^{-\partial_n}\right)g_{n+1}^m  + \delta_m\left(\tan\dfrac{\kappa_{n+2}}{2} + \tan\dfrac{\kappa_{n+1}}{2}\right)g_{n+1}^m}
{1 +  \delta_m\left(\dfrac{1}{\sin\kappa_{n+1}}u_n^m e^{\partial_n}
- \dfrac{1}{\sin\kappa_{n+1}} - \dfrac{1}{\sin\kappa_{n}}
 + \dfrac{1}{\sin\kappa_{n}} \dfrac{1}{u_{n-1}^m}e^{-\partial_n}\right) g_n^m  + \delta_m\left(\tan\dfrac{\kappa_{n+1}}{2} + \tan\dfrac{\kappa_{n}}{2}\right)g_{n}^m}.
\end{split}
\end{displaymath}
\end{footnotesize}
\end{minipage}
\end{equation}
Equation \eqref{eqn:u_g1} or \eqref{eqn:u_g2} is the general form of the
deformation equation of the discrete curves in the framework of the
similarity geometry, and is regarded as a discrete counterpart of
\eqref{eqn:udot}.
\begin{thm}\label{thm:discrete_deformation}
For a fixed $m\in\mathbb{Z}$, let $\gamma_n^m\in\mathbb{R}^2$ be a
 discrete curve, and let $\kappa_n=\angle(\gamma_{n+1}^m-\gamma_n^m,
 \gamma_n^m-\gamma_{n-1}^m)$, $q_n^m= |\gamma_{n+1}^m-\gamma_n^m|$,
 $u_n^m= \frac{q_{n+1}^m}{q_n^m}$. For given $\delta_m, a_m,
 H_0^m\in\mathbb{R}$ and a function $g_n^m\in\mathbb{R}$, we define
 $H_n^m\in\mathbb{R}$ recursively by
\begin{equation}\label{eqn:determine_H}
\begin{split}
& \frac{\sin(\kappa_{n+1}-a_m)}{\sin\kappa_{n+1}}H_n^m  + \frac{\sin a_m}{\sin\kappa_{n}}
 \frac{1}{u_{n-1}^m}H_{n-1}^m\\
&
= 
1+  \delta_m\left\{\frac{u_n^m g_{n+1}^m }{\sin\kappa_{n+1}}
-\left(\frac{1}{\sin\kappa_{n+1}} + \frac{1}{\sin\kappa_{n}} \right) g_n^m 
 + \frac{1}{\sin\kappa_{n}} \frac{g_{n-1}^m}{u_{n-1}^m}\right\}
+ \delta_m\left(\tan\frac{\kappa_{n+1}}{2}+\tan\frac{\kappa_n}{2}\right)g_n^m.
\end{split}
\end{equation}
Then we have:
\begin{enumerate}
 \item By choosing $\delta_m$ and $a_m$ appropriately, $H_n^m$ becomes positive.
 \item Setting the function $f_n^m$ by
\begin{equation*}
  f_n^m = - \frac{u_n^m}{\sin\kappa_{n+1}} g_{n+1}^m + \frac{\cos\kappa_{n+1}}{\sin\kappa_{n+1}}g_n^m  
- \frac{1}{\delta_m}   + \frac{\sin(\kappa_{n+1}-a_m)}{\delta_m\sin\kappa_{n+1}}H_n^m,
\end{equation*}
the condition \eqref{eqn:recursion_fg2} is satisfied. Namely,
\eqref{d-motion-1step} gives a isogonal deformation.
 \item $u_n^m$ satisfies \eqref{eqn:u_g2}.
\end{enumerate}
\end{thm}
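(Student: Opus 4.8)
The plan is to run the computation that produced \eqref{eqn:determine_H} in reverse. Instead of prescribing $f^m,g^m,H^m$ and checking compatibility, I prescribe $g^m$ freely, read off $f^m$ and $H^m$, and verify that the vector recursion \eqref{eqn:recursion_fg2} is restored. The starting point is that, after expanding the rotation $R(-\kappa_{n+1})$, \eqref{eqn:recursion_fg2} is equivalent to the two scalar relations displayed just before \eqref{eqn:determine_H}, namely the equation for $u_n^m f_{n+1}^m$ and the equation for $u_n^m g_{n+1}^m$. I would prove (2) first, since both (1) and (3) build on it, and only then return to (1) and (3).

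For (2), the key observation is that the prescribed formula for $f_n^m$ is exactly the second scalar relation solved algebraically for $f_n^m$; hence that relation holds identically, for every $n$, once $f^m$ is defined this way. It then remains to check the first scalar relation. Taking it at index $n-1$ and substituting the $f$-formula for both $f_n^m$ and $f_{n-1}^m$, the terms carrying $H_{n-1}^m$ collapse through $\cos\kappa_n\sin(\kappa_n-a_m)-\sin\kappa_n\cos(\kappa_n-a_m)=-\sin a_m$, while the residual $g$- and constant terms simplify by $\cos^2+\sin^2=1$. Multiplying the result through by $\delta_m/u_{n-1}^m$ and using $-1/\sin\kappa+\tan(\kappa/2)=-\cos\kappa/\sin\kappa$ reproduces \eqref{eqn:determine_H} verbatim. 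Since $H^m$ is defined by that very recursion, the first scalar relation holds for all $n$ as well; thus \eqref{eqn:recursion_fg2} holds, and by Proposition \ref{prop:deformation1} applied level by level the deformation \eqref{d-motion-1step} is isogonal with frame relation \eqref{phi-M}.

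For (3), I would invoke the compatibility condition \eqref{d-consistency}, which now reads $u_n^{m+1}/u_n^m=H_{n+1}^m/H_n^m$, hence $H_{n-1}^m/H_n^m=u_{n-1}^m/u_{n-1}^{m+1}$. Substituting the latter into the left-hand side of \eqref{eqn:determine_H} lets me factor out $H_n^m$, turning it into $H_n^m\left(\frac{\sin(\kappa_{n+1}-a_m)}{\sin\kappa_{n+1}}+\frac{\sin a_m}{\sin\kappa_n}\frac{1}{u_{n-1}^{m+1}}\right)$. Writing $\Phi_k$ for the right-hand side of \eqref{eqn:determine_H} at index $k$ and $A_k$ for the bracketed factor, this says $\Phi_k=H_k^m A_k$; forming the ratio $\Phi_{n+1}/\Phi_n$ and replacing $H_{n+1}^m/H_n^m$ by $u_n^{m+1}/u_n^m$ yields precisely \eqref{eqn:u_g2}, the $e^{\pm\partial_n}$ notation there being just shorthand for $\Phi_k$.

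For (1), the cleanest choice is $a_m=0$: then $\sin a_m=0$ annihilates the $H_{n-1}^m$ term and the coefficient of $H_n^m$ becomes $1$, so \eqref{eqn:determine_H} presents $H_n^m$ explicitly as $1+\delta_m(\cdots)$ with the parenthesis built from $u^m$, $g^m$, and $\kappa$; taking $\delta_m>0$ small then forces $H_n^m>0$, exactly as in Remark \ref{rem:positivity_H1}. I expect the genuine obstacle to sit here rather than in the algebra of (2): for $a_m\neq0$ the relation \eqref{eqn:determine_H} is a true $n$-recursion, so simultaneous positivity of all $H_n^m$ is no longer automatic and must be controlled by uniform bounds on $u^m$, $g^m$ and on $\kappa$ away from $0$ and $\pi$, whereas the substance of (2) and (3) is a bookkeeping reversal of the derivation already carried out above the theorem.
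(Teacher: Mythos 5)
Your proof is correct, but it splits into two cases relative to the paper. For parts (2) and (3) you follow essentially the paper's own route: its (very terse) proof says these statements ``are derived immediately by solving \eqref{eqn:recursion_fg2} and using the compatibility condition \eqref{d-consistency}'', and the computation it refers to is exactly the one displayed above the theorem --- solve the second scalar component of \eqref{eqn:recursion_fg2} for $f_n^m$, substitute into the first component at the shifted index, and divide the resulting relations. Your reversal of this derivation, including the collapse of the $H_{n-1}^m$ terms via $\cos\kappa_n\sin(\kappa_n-a_m)-\sin\kappa_n\cos(\kappa_n-a_m)=-\sin a_m$, the use of $\tan(\kappa/2)=(1-\cos\kappa)/\sin\kappa$, and the identification of \eqref{eqn:u_g1}--\eqref{eqn:u_g2} as the ratio $\Phi_{n+1}/\Phi_n$ after inserting $H_{n-1}^m/H_n^m=u_{n-1}^m/u_{n-1}^{m+1}$, is the same argument run backwards and is sound. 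Where you genuinely diverge is part (1). Your choice $a_m=0$ makes \eqref{eqn:determine_H} degenerate: it ceases to be a recursion and gives $H_n^m=1+\delta_m U_n^m$ explicitly, so positivity follows for small $\delta_m>0$ exactly as in Remark \ref{rem:positivity_H1} (with the minor caveat that the datum $H_0^m$ is then no longer free but forced to equal $1+\delta_m U_0^m$). This suffices for the existence claim in (1), under the same implicit hypothesis the paper uses, namely that $\min_n U_n^m$ is finite. The paper instead proves Lemma \ref{lem:condition_delta_a}: assuming all $\kappa_n$ lie in $(0,\pi)$ (or all in $(-\pi,0)$), it solves the genuine two-term recursion \eqref{eqn:eq_for_H} in closed form and imposes the sign conditions \eqref{eqn:sufficient_1}--\eqref{eqn:sufficient_3} on every factor, which translate into an open range of rotation angles, $\kappa_{\max}-\pi<a_m<0$ resp.\ $0<a_m<\kappa_{\min}+\pi$; note these strict inequalities exclude your value $a_m=0$, so the two results are complementary. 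What each buys: your argument is shorter, needs no one-sided convexity hypothesis on $\kappa$, and covers the point the lemma omits; the paper's lemma exhibits a whole interval of nonzero $a_m$, which is what the non-autonomous equations such as \eqref{eqn:deformation_dBurgers} and the plotted examples with $a=\pi/3$ actually require, since $a_m$ is the geometrically meaningful rotation per deformation step. Your closing remark correctly locates the difficulty --- for $a_m\neq0$ positivity is a true $n$-recursion problem --- and the uniform sign control you anticipate is precisely what the paper's lemma supplies.
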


%
We note that \eqref{eqn:u_g2} yields the discrete Burgers equation and
its generalizations to that of higher-order by suitable specialization
of $g_n^m$.

\paragraph{Autonomous case}

In the case
of $\kappa_n=\epsilon=\mathrm{const.}$, \eqref{eqn:u_g2} is reduced to
\begin{align}
\dfrac{u_n^{m+1}}{u_n^m}\frac{\sin(\epsilon-a_m) + \dfrac{\sin a_m }{u_{n}^{m+1}}}{\sin(\epsilon-a_m) + \dfrac{\sin a_m}{u_{n-1}^{m+1}}}
&=\label{eqn:u_g_autonomous1-1}
\frac{1 +  \dfrac{\delta_m}{\sin\epsilon}\left(u_{n+1}^m e^{\partial_n}
- 2 + \dfrac{1}{u_{n}^m}e^{-\partial_n}\right)g_{n+1}^m  + \delta_m\dfrac{2-2\cos\epsilon}{\sin\epsilon} g_{n+1}^m}{1 +  \dfrac{\delta_m}{\sin\epsilon} \left(u_n^m e^{\partial_n}- 2 + \dfrac{1}{u_{n-1}^m}e^{-\partial_n}\right) g_n^m
+ \delta_m\dfrac{2-2\cos\epsilon}{\sin\epsilon}g_{n}^m}\\
&=\label{eqn:u_g_autonomous1-2}
\frac{1 + \delta_m \dfrac{\epsilon^2}{\sin\epsilon}\Omega_{n+1}^{(2)}g_{n+1}^m + \delta_m\dfrac{2-2\cos\epsilon}{\sin\epsilon} g_{n+1}^m}{1 + \delta_m \dfrac{\epsilon^2}{\sin\epsilon}\Omega_n^{(2)} g_n^m  + \delta_m\dfrac{2-2\cos\epsilon}{\sin\epsilon}g_{n}^m},
\end{align}
where $\Omega_n^{(2)}$ is the recursion operator of the discrete
Burgers hierarchy given in \eqref{eqn:recursion_autonomous2}.
Putting $g_n=\frac{\sin \epsilon}{\epsilon^2}$ and $a_m=0$,
\eqref{eqn:u_g_autonomous1-1} recovers the autonomous discrete Burgers
equation \eqref{eqn:autonomous_Burgers1}.
Equation \eqref{eqn:u_g_autonomous1-2} is a discrete counterpart of
\eqref{eqn:udot}.
Therefore, due to \eqref{eqn:recursion_autonomous3},
by putting $g_n^m$ as
\begin{equation*}
g_n^m=\dfrac{\sin \epsilon}{\epsilon^2}
\hat{K}^{(i)}[u_n^m],
\end{equation*}
we obtain a variant of the higher order autonomous discrete Burgers
equation
\begin{equation*}
\dfrac{u_n^{m+1}}{u_n^m} \frac{1 + \dfrac{\sin a_m}{\sin(\epsilon -a_m)} \dfrac{1}{u_{n}^{m+1}}}
{1  + \dfrac{\sin a_m}{\sin(\epsilon - a_m)} \dfrac{1}{u_{n-1}^{m+1}}}
= 
\frac{1 +  \delta_m\hat{K}^{(i+2)}[u_{n+1}^m]  + \dfrac{2-2\cos \epsilon}{\epsilon^2}\delta_m\hat{K}^{(i)}[u_{n+1}^m]}
{1 +  \delta_m\hat{K}^{(i+2)}[u_{n}^m]   + \dfrac{2-2\cos \epsilon}{\epsilon^2}\delta_m\hat{K}^{(i)}[u_n^m]},
\end{equation*}
which corresponds to \eqref{eqn:n-th_Burgers}.

\paragraph{Non-autonomous case}

For the case of generic $\kappa_n$, we see that the recursion operator
of the non-autonomous discrete Burgers hierarchy appears in the right
hand side of \eqref{eqn:u_g2}. In fact, we have
\begin{equation*}
 \dfrac{1}{\sin\kappa_{n+1}}u_n^m e^{\partial_n}
- \dfrac{1}{\sin\kappa_{n+1}} - \dfrac{1}{\sin\kappa_{n}}
 + \dfrac{1}{\sin\kappa_{n}} \dfrac{1}{u_{n-1}^m}e^{-\partial_n}
= \epsilon_n^{(i+2)}\Omega_n^{(2,i+2)},
\end{equation*}
by parametrizing $\sin \kappa_n$ as
\begin{equation*}
\sin \kappa_n =
\begin{cases}
\epsilon_{n-1}^{(i+1)} & i=2l,\\
\epsilon_n^{(i+1)} & i=2l+1,
\end{cases}
\end{equation*}
where $\epsilon_n^{(i)}$ and $\Omega_n^{(i)}$ are given in
\eqref{eqn:nonautonomous_epsilon} and
\eqref{eqn:recursion_nonautonomous_Burgers}, respectively.
For the simplest case $i = 0$,
we choose $g_n^m=1$ in \eqref{eqn:u_g2} and have
\begin{equation*}
 \dfrac{u_n^{m+1}}{u_n^m} \frac{\dfrac{\sin(\kappa_{n+2}-a_m)}{\sin\kappa_{n+2}}  + \dfrac{\sin a_m}{\sin\kappa_{n+1}} \dfrac{1}{u_{n}^{m+1}}}
{\dfrac{\sin(\kappa_{n+1}-a_m)}{\sin\kappa_{n+1}}  + \dfrac{\sin a_m}{\sin\kappa_{n}} \dfrac{1}{u_{n-1}^{m+1}}}
= 
\frac{1 +  \delta_m \epsilon_{n+1}^{(2)}K_{n+1}^{(2)}[u_{n+1}^m]  + \delta_m\left(\tan\dfrac{\kappa_{n+2}}{2} + \tan\dfrac{\kappa_{n+1}}{2}\right)}
{1 +  \delta_m \epsilon_{n}^{(2)} K_n^{(2)}[u_{n}^m]   + \delta_m\left(\tan\dfrac{\kappa_{n+1}}{2} + \tan\dfrac{\kappa_{n}}{2}\right)},
\end{equation*}
which is a non-autonomous discrete analogue of the Burgers equation
\eqref{eqn:Burgers_with_a}. If we set $a_m=0$, we obtain a simpler
version of the non-autonomous discrete Burgers equation
\begin{equation*}
\frac{u_n^{m+1}}{u_n^m}= 
\frac{1 +  \delta_m\epsilon_{n+1}^{(2)}K_{n+1}^{(2)}[u_{n+1}^m]  + \delta_m\left(\tan\dfrac{\kappa_{n+2}}{2} + \tan\dfrac{\kappa_{n+1}}{2}\right)}
{1 + \delta_m \epsilon_{n}^{(2)}K_n^{(2)}[u_{n}^m]   + \delta_m\left(\tan\dfrac{\kappa_{n+1}}{2} + \tan\dfrac{\kappa_{n}}{2}\right)}.
\end{equation*}
For $i > 0$, we put $g_n^m=K_n^{(i)}[u_n^m]$ and find that $u_n^m$
satisfies a variant of non-autonomous higher-order discrete Burgers
equation
\begin{equation}\label{eqn:deformation_dBurgers}
\begin{split}
\dfrac{u_n^{m+1}}{u_n^m} &
\frac{\dfrac{\sin(\kappa_{n+2}-a_m)}{\sin\kappa_{n+2}}  + \dfrac{\sin a_m}{\sin\kappa_{n+1}} \dfrac{1}{u_{n}^{m+1}}}
{\dfrac{\sin(\kappa_{n+1}-a_m)}{\sin\kappa_{n+1}}  + \dfrac{\sin a_m}{\sin\kappa_{n}} \dfrac{1}{u_{n-1}^{m+1}}}\\
&= 
\frac{1 + \delta_m \epsilon_{n+1}^{(i+2)}K_{n+1}^{(i+2)}[u_{n+1}^m]  + \delta_m\left(\tan\dfrac{\kappa_{n+2}}{2} + \tan\dfrac{\kappa_{n+1}}{2}\right)K_{n+1}^{(i)}[u_{n+1}^m]}
{1 + \delta_m \epsilon_{n}^{(i+2)}K_n^{(i+2)}[u_{n}^m] +
 \delta_m\left(\tan\dfrac{\kappa_{n+1}}{2} +
 \tan\dfrac{\kappa_{n}}{2}\right)K_n^{(i)}[u_n^m]},
\end{split}
\end{equation}
which is a non-autonomous discrete analogue of the higher-order Burgers
equation \eqref{eqn:n-th_Burgers}. Note that $q_n^m=|\gamma_{n+1}^m -
\gamma_n^m|$ satisfies the linear equation
\begin{equation*}
\frac{1}{\delta_m}
\left\{\dfrac{\sin(\kappa_{n+1}-a_m)}{\sin\kappa_{n+1}} q_n^{m+1}  + \dfrac{\sin a_m}{\sin\kappa_{n}} q_{n-1}^{m+1} - q_n^m\right\}
=\epsilon_{n}^{(i+2)}L_n^{(i+2)}[q_{n}^m]   + \left(\tan\dfrac{\kappa_{n+1}}{2} + \tan\dfrac{\kappa_{n}}{2}\right)L_n^{(i)}[q_n^m].
\end{equation*}

%

We now prove Theorem \ref{thm:discrete_deformation}. The statement (2)
and (3) are derived immediately by solving \eqref{eqn:recursion_fg2} and
using the compatibility condition \eqref{d-consistency}. For the
statement (1), we have the following as a sufficient condition for the
positivity of $H_n^m$:
%
\begin{lemma}\label{lem:condition_delta_a}
We assume that $\kappa_n$ satisfies $0<\kappa_n<\pi$ or
$-\pi<\kappa_n<0$ for all $n$.
For each $m$, we choose $\delta_m$ and $a_m$ in the following manner:
\begin{equation}\label{eqn:condition_delta_a}
\left\{\begin{array}{ll}
{\displaystyle 0<\delta_m}& (U^m_{\rm min}>0)\\
{\displaystyle  0<\delta_m<-1/U^m_{\rm min}}
& (U^m_{\rm min}<0)
\end{array}\right. , \quad
\left\{\begin{array}{ll}
{\displaystyle \kappa_{\rm max}-\pi<a_m<0}& (0<\kappa_n<\pi)\\
 {\displaystyle  0<a_m<\kappa_{\rm min}+\pi}
& (-\pi<\kappa_n<0)
\end{array}\right. ,
\end{equation}
%
where
\begin{equation*}
 U_n^m=\frac{u_n^m g_{n+1}^m }{\sin\kappa_{n+1}}
-\left(\frac{1}{\sin\kappa_{n+1}} + \frac{1}{\sin\kappa_{n}} \right) g_n^m 
 + \frac{1}{\sin\kappa_{n}} \frac{g_{n-1}^m}{u_{n-1}^m}
+ \left(\tan\frac{\kappa_{n+1}}{2}+\tan\frac{\kappa_n}{2}\right)g_n^m,
\end{equation*}
and 
\begin{equation*}
U_{{\rm max}}^{m}=\max_{n} U_{n}^{m},\quad
U_{{\rm min}}^m=\min_{n} U_{n}^m,\quad
\kappa_{\rm min}=\min\limits_{n}\kappa_n,\quad
\kappa_{\rm max}=\max\limits_{n}\kappa_n.
\end{equation*}
Then we have $H_n^m>0$.
\end{lemma}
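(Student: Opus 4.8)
The plan is to recast the defining recursion \eqref{eqn:determine_H} as a first-order linear recurrence for $H_n^m$ and to show that, under the stated choices of $\delta_m$ and $a_m$, all three of its ingredients carry the correct sign so that positivity propagates. Writing the recursion as
\begin{equation*}
A_n H_n^m + B_n H_{n-1}^m = C_n,\qquad
A_n = \frac{\sin(\kappa_{n+1}-a_m)}{\sin\kappa_{n+1}},\quad
B_n = \frac{\sin a_m}{\sin\kappa_n}\frac{1}{u_{n-1}^m},\quad
C_n = 1 + \delta_m U_n^m,
\end{equation*}
the goal reduces to establishing $A_n > 0$, $B_n < 0$, and $C_n > 0$ for every $n$. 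Granting these, the recurrence rearranges to $H_n^m = (C_n + |B_n|\,H_{n-1}^m)/A_n$, from which positivity propagates by induction from a positive initial value $H_0^m > 0$.

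First I would dispose of $C_n$. Since $U_n^m \geq U_{\min}^m$ by definition, if $U_{\min}^m > 0$ then $C_n \geq 1 > 0$ for any $\delta_m > 0$; if $U_{\min}^m < 0$, the bound $\delta_m < -1/U_{\min}^m$ gives $\delta_m U_n^m \geq \delta_m U_{\min}^m > -1$, hence $C_n > 0$. This is precisely the first alternative in \eqref{eqn:condition_delta_a}, and it is the only place the restriction on $\delta_m$ enters. The sign of $B_n$ is then immediate: the lengths give $u_{n-1}^m > 0$, so $B_n$ has the sign of $\sin a_m / \sin\kappa_n$. In the regime $0 < \kappa_n < \pi$ we have $\sin\kappa_n > 0$, and the range $\kappa_{\max}-\pi < a_m < 0$ forces $-\pi < a_m < 0$ (since $\kappa_{\max} < \pi$), whence $\sin a_m < 0$ and $B_n < 0$; the regime $-\pi < \kappa_n < 0$ is symmetric, using $0 < a_m < \kappa_{\min}+\pi$.

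The crux, and the step I expect to be the main obstacle, is the positivity of $A_n$, because a single constant $a_m$ must keep $\sin(\kappa_{n+1}-a_m)$ of the same sign as $\sin\kappa_{n+1}$ for all $n$ simultaneously. Here the extremal bounds on $a_m$ do the work: for $0 < \kappa_n < \pi$ with $\kappa_{\max}-\pi < a_m < 0$, one checks $\kappa_{n+1}-a_m > \kappa_{n+1} > 0$ and $\kappa_{n+1}-a_m \leq \kappa_{\max}-a_m < \pi$, so $\kappa_{n+1}-a_m \in (0,\pi)$ uniformly in $n$ and therefore $A_n > 0$; the other regime is handled identically with $\kappa_{\min}$. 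With $A_n > 0$, $B_n < 0$, and $C_n > 0$ in hand, the identity $A_n H_n^m = C_n + |B_n|\,H_{n-1}^m$ shows that $H_{n-1}^m \geq 0$ implies $H_n^m \geq C_n/A_n > 0$, so positivity is inherited along the recursion starting from $H_0^m > 0$, completing the proof.
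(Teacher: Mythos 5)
Your proposal is correct and follows essentially the same route as the paper: both recast \eqref{eqn:determine_H} as a first-order linear recurrence and reduce positivity to the same three sign conditions (your $A_n>0$, $B_n<0$, $C_n>0$ are exactly the paper's conditions \eqref{eqn:sufficient_1}--\eqref{eqn:sufficient_3}), verified from \eqref{eqn:condition_delta_a} in the same way. The only cosmetic difference is that you propagate positivity by induction, while the paper writes out the formal solution of the recurrence as a sum of products before imposing the same sign constraints.
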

%
\begin{proof}[Proof of Lemma \ref{lem:condition_delta_a}]
We first write the recursion relation \eqref{eqn:determine_H} as
\begin{equation}\label{eqn:eq_for_H}
H_n^{m} = -\alpha_n^m H_{n-1}^m
+ \beta_n^m\left(1+\delta_m U_{n}^{m}\right),
\end{equation}
where
\begin{equation*}
\alpha_n^m = \dfrac{\sin a_m\sin\kappa_{n+1}}{u_{n-1}^m\sin(\kappa_{n+1}-a_m)\sin\kappa_{n}},\quad
\beta_n^m=\dfrac{\sin\kappa_{n+1}}{\sin(\kappa_{n+1}-a_m)}.
\end{equation*}
Then \eqref{eqn:eq_for_H} can be solved formally as
\begin{equation*}
 H_n^m =\left( H_0^m + \sum_{\nu=0}^n \beta_\nu^m (1 + \delta_m U_{\nu}^{m})\prod_{k=0}^\nu (-\alpha_k^m)^{-1}\right)
\prod_{\mu=0}^{n}(-\alpha_\mu^m),\quad H_0^m>0.
\end{equation*}
Noticing that $\delta_m, u_n^m>0$, it is sufficient for $H_n^m>0$ that all
of the following conditions
\begin{align}
&\dfrac{\sin\kappa_{n+1}}{\sin(\kappa_{n+1}-a_m)}>0,\label{eqn:sufficient_1}\\
& 1 + \delta_m U_{n}^{m} > 0 ,\label{eqn:sufficient_2}\\
&\prod_{\nu=0}^n \left(-\dfrac{\sin a_m\sin\kappa_{\nu+1}}{\sin(\kappa_{\nu+1}-a_m)\sin\kappa_{\nu}}\right)>0,
\label{eqn:sufficient_3}
\end{align}
are satisfied for all $n$. Then it is easy to see that
\eqref{eqn:sufficient_2} is satisfied by choosing $\delta_m$ as
\eqref{eqn:condition_delta_a}. The conditions \eqref{eqn:sufficient_1}
and \eqref{eqn:sufficient_2} imply
\begin{equation}
\begin{split}
& \sin\kappa_n>0,\quad\sin(\kappa_n-a_m)>0 ,\quad \sin a_m<0\quad \text{for}\ {}^\forall n, \\
\text{or}& \\
& \sin\kappa_n<0,\quad \sin(\kappa_n-a_m)<0 ,\quad\sin a_m>0\quad \text{for}\ {}^\forall n,
\end{split}
\end{equation}
from which we have
\begin{equation}\label{eqn:condition_a1}
\begin{split}
& 0<\kappa_n<\pi,\quad \kappa_{\max}- a_m< \pi,\quad -\pi<a_m<0,\\ 
\text{or}&\\
& -\pi<\kappa_n<0,\quad -\pi < \kappa_{\min}- a_m,\quad 0<a_m<\pi.
\end{split}
\end{equation}
This is equivalent to the second condition in \eqref{eqn:condition_delta_a}. 
\end{proof}

\subsection{Explicit formula}
An explicit representation formula for the curve $\gamma_n^m$ is
constructed in a similar manner to the smooth curves.
\begin{prop}
Let $\gamma_n^m$ be a discrete curve satisfying \eqref{phi-L} and
\eqref{phi-M}.  Then $\gamma_n^m$ admits the representation formula
\begin{equation}\label{eqn:discrete_representation}
\gamma^m_n
=
\sum_{j}^{n-1}
q^m_j
\begin{bmatrix}
\cos \theta^m_j\\
\sin \theta^m_j
\end{bmatrix},\quad
\theta^m_n
=
\sum_{\nu}^n \kappa_\nu + \sum_{\mu}^m a_\mu.
\end{equation}
\end{prop}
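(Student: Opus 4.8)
The plan is to reduce the proposition to a telescoping identity for the tangent vectors together with a computation of the direction of each tangent. Writing $T_j^m=\gamma_{j+1}^m-\gamma_j^m$, one has the trivial telescoping $\gamma_n^m=\gamma_{j_0}^m+\sum_{j=j_0}^{n-1}T_j^m$, so that the base point $\gamma_{j_0}^m$ is precisely the ``integration constant'' hidden in the indefinite sum $\sum_j^{n-1}$ of \eqref{eqn:discrete_representation}. Since $q_j^m=|T_j^m|$ by definition, the only thing left to establish is that the direction of $T_j^m$ is the angle $\theta_j^m=\sum_\nu^n\kappa_\nu+\sum_\mu^m a_\mu$, i.e. that $T_j^m=q_j^m\,{}^{\mathrm t}\!\left[\cos\theta_j^m,\ \sin\theta_j^m\right]$.

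The key observation is that the similarity Frenet frame is a scaled rotation. Indeed $\phi_n^m=[T_n^m,\,R(\tfrac{\pi}{2})T_n^m]$ has positive determinant $(q_n^m)^2$, so it lies in the identity component of $\CO(2)$, and setting $\theta_n^m=\arg T_n^m$ we may write $\phi_n^m=q_n^m R(\theta_n^m)$. The group $\{rR(\theta):r>0\}$ is commutative and isomorphic to $\mathbb{C}^\times$ via $rR(\theta)\mapsto r\,e^{\mathrm{i}\theta}$; under this identification both frame equations \eqref{phi-L} and \eqref{phi-M} become ordinary multiplications of complex numbers, in which moduli multiply and arguments add, and the order of multiplication becomes immaterial.

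First I would substitute $\phi_n^m=q_n^m R(\theta_n^m)$ into \eqref{phi-L}. Since $X_n^m=\frac{q_{n+1}^m}{q_n^m}R(\kappa_{n+1})$, comparison of arguments yields the recursion $\theta_{n+1}^m-\theta_n^m=\kappa_{n+1}$, while comparison of moduli reproduces the defining relation of $X_n^m$ and carries no new information. Next I would substitute the same expression into \eqref{phi-M} with $Y_n^m=H_n^m R(a_m)$; comparison of arguments gives $\theta_n^{m+1}-\theta_n^m=a_m$, and the moduli give $q_n^{m+1}=H_n^m q_n^m$. Summing these two first-order linear recursions, in $n$ and in $m$ respectively, integrates the increments to $\theta_n^m=\sum_\nu^n\kappa_\nu+\sum_\mu^m a_\mu$, the additive constant of summation being exactly the ambiguity left undetermined by the indefinite sums, in parallel with the smooth formula \eqref{eqn:representation}.

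The one point requiring care is the mutual consistency of the two recursions for $\theta$: the value $\theta_{n+1}^{m+1}$ must agree whether one advances first in $n$ and then in $m$, or in the opposite order. This holds automatically because $\kappa_{n+1}$ is independent of $m$ (this is precisely the isogonality $\kappa^m=\kappa$ established earlier) and $a_m$ is independent of $n$, so the mixed second difference $\theta_{n+1}^{m+1}-\theta_{n+1}^m-\theta_n^{m+1}+\theta_n^m$ vanishes identically; the genuinely nontrivial length compatibility is carried entirely by \eqref{d-consistency}, which is granted by hypothesis. With $T_j^m=q_j^m\,{}^{\mathrm t}\!\left[\cos\theta_j^m,\ \sin\theta_j^m\right]$ now in hand, inserting this into the telescoping sum gives \eqref{eqn:discrete_representation}.
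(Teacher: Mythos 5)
Your proposal is correct and follows essentially the same route as the paper: both write the frame in polar form $\phi_n^m = q_n^m R(\theta_n^m)$, read off the angle recursions $\theta_{n+1}^m-\theta_n^m=\kappa_{n+1}$ and $\theta_n^{m+1}-\theta_n^m=a_m$ (modulo $2\pi$, which the paper removes by a ``without loss of generality'' normalization) from \eqref{phi-L} and \eqref{phi-M}, and then sum the increments and telescope the tangents to obtain \eqref{eqn:discrete_representation}. Your added remarks---the cross-consistency of the two recursions and the complex-multiplication interpretation of $\{rR(\theta)\}$---are harmless elaborations of the same argument rather than a different method.
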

\begin{proof}
 Since $|\gamma_{n+1}^m - \gamma_n^m|/q_n^m = 1$, there exist a function $\theta_n^m\in[0,2\pi)$ such that
\begin{equation}
 \frac{\gamma_{n+1}^m - \gamma_n^m}{q_n^m} 
= \left[\begin{array}{c}\cos\theta_n^m \\ \sin\theta_n^m \end{array}\right],
\end{equation}
so that the frame $\phi_n^m$ is expressed as 
\begin{equation}
\phi_n^m = q_n^m R(\theta_n^m).
\end{equation}
Then \eqref{phi-L} and \eqref{phi-M} give
\begin{equation}
 \theta_{n+1}^m - \theta_n^m - \kappa_n\in2\pi\mathbb{Z},\quad 
 \theta_{n}^{m+1} - \theta_n^m - a_m\in2\pi\mathbb{Z},
\end{equation}
and we may assume
\begin{equation}
 \theta_{n+1}^m - \theta_n^m - \kappa_n = 0,\quad
 \theta_{n}^{m+1} - \theta_n^m - a_m = 0,
\end{equation}
without losing generality, which implies \eqref{eqn:discrete_representation}.
\end{proof}

For the curves constructed from the shock wave solutions of the autonomous discrete Burgers
hierarchy, the summation in \eqref{eqn:discrete_representation} can be computed explicitly.  For simplicity, we
demonstrate it by taking the case of $i=2$ with $\kappa_n=\epsilon$ (const.), $\delta_m=\delta$
(const.) in \eqref{eqn:deformation_dBurgers}
\begin{equation}\label{eqn:auto_dBurgers_with_a}
 \dfrac{u_n^{m+1}}{u_n^m} \frac{\dfrac{\sin(\epsilon-a_m)}{\sin\epsilon}  + \dfrac{\sin a_m}{\sin\epsilon} \dfrac{1}{u_{n}^{m+1}}}
{\dfrac{\sin(\epsilon-a_m)}{\sin\epsilon}  + \dfrac{\sin a_m}{\sin\epsilon} \dfrac{1}{u_{n-1}^{m+1}}}
= 
\dfrac{1 + \dfrac{\delta_m}{\epsilon^2}
\left(u^m_{n + 1} - 2 \cos \epsilon
+ \dfrac{1}{u^m_n}\right)}%
{1 + \dfrac{\delta_m}{\epsilon^2}
\left(u^m_n - 2 \cos \epsilon
+ \dfrac{1}{u^m_{n - 1}}\right)},
\end{equation}
which is linearized in terms of $q_n^m$ as
\begin{equation}\label{eqn:linear_auto_dBurgers_with_a}
 \frac{\dfrac{\sin(\epsilon-a_m)}{\sin\epsilon} q_n^{m+1}  + \dfrac{\sin a_m}{\sin\epsilon} q_{n-1}^{m+1} - q_n^m}{\delta} 
=\frac{q_{n+1}^m - 2 q_n^m \cos\epsilon + q_{n-1}^m}{\epsilon^2}.
\end{equation}
\eqref{eqn:linear_auto_dBurgers_with_a} admits the solution
\begin{equation}
q_n^m = e^{\mu_0 m} +
\sum_{k=1}^M \exp(\lambda_k n + \mu_k m + \xi_k),
\end{equation}
where $M\in\mathbb{N}$, $\lambda_k$, $\xi_k$ ($k=1,\ldots,M$) are arbitrary constants and
\begin{equation}
 \mu_k = \log \frac{1+\frac{\delta}{\epsilon^2}\left(e^{\lambda_k} - 2\cos\epsilon + e^{-\lambda_k}\right)}
{\frac{\sin(\epsilon-a_m)}{\sin\epsilon} + \frac{\sin a_m}{\sin\epsilon}e^{-\lambda_k} }.
\end{equation}
Then, by using the formulas
\begin{align*}
\sum_{j}^{n - 1}
c^j \cos \left(j \epsilon\right)
&=
\dfrac{c^{n + 1} \cos \left(\left(n - 1\right) \epsilon\right)
- c^n \cos \left(n \epsilon\right)}%
{c^2 - 2 c \cos \epsilon + 1},\\
\sum_{j}^{n - 1}
c^j \sin \left(j \epsilon\right)
&=
\dfrac{c^{n + 1} \sin \left(\left(n - 1\right) \epsilon\right)
- c^n \sin \left(n \epsilon\right)}%
{c^2 - 2 c \cos \epsilon + 1},
\end{align*}
where $c$ is a constant satisfying $c^j\to 0$ ($j\to -\infty$), we have
from \eqref{eqn:discrete_representation} that
\begin{equation}
\gamma^m_n=
\sum_{k = 0}^M
\dfrac{\exp \left(\lambda_k n + \mu_k m + \xi_k\right)}%
{e^{2 \lambda_k} - 2 e^{\lambda_k} \cos \epsilon + 1}
\begin{bmatrix}
e^{\lambda_k}
\cos \theta_{n-1}^m
- \cos \theta_n^m\\
e^{\lambda_k}
\sin \theta_{n-1}^m
- \sin \theta_n^m
\end{bmatrix},
\end{equation}
with $\lambda_0 = \xi_0 = 0$ and $\lambda_k>0$.
Figure \ref{fig:dBurgers_curve-1}, \ref{fig:dBurgers_curve-2} illustrate
motion of discrete plane curves corresponding to $M$-shock wave
solutions ($M = 1, 2$, respectively) of the discrete Burgers equation
\eqref{eqn:auto_dBurgers_with_a} with parameters $a = \pi/3$, $\epsilon
= \pi/4$, $\delta = 1$, $\xi_1 = 0$.
\begin{center}
\begin{figure}[h]
\begin{minipage}{.3\linewidth}
\includegraphics[width=\linewidth]{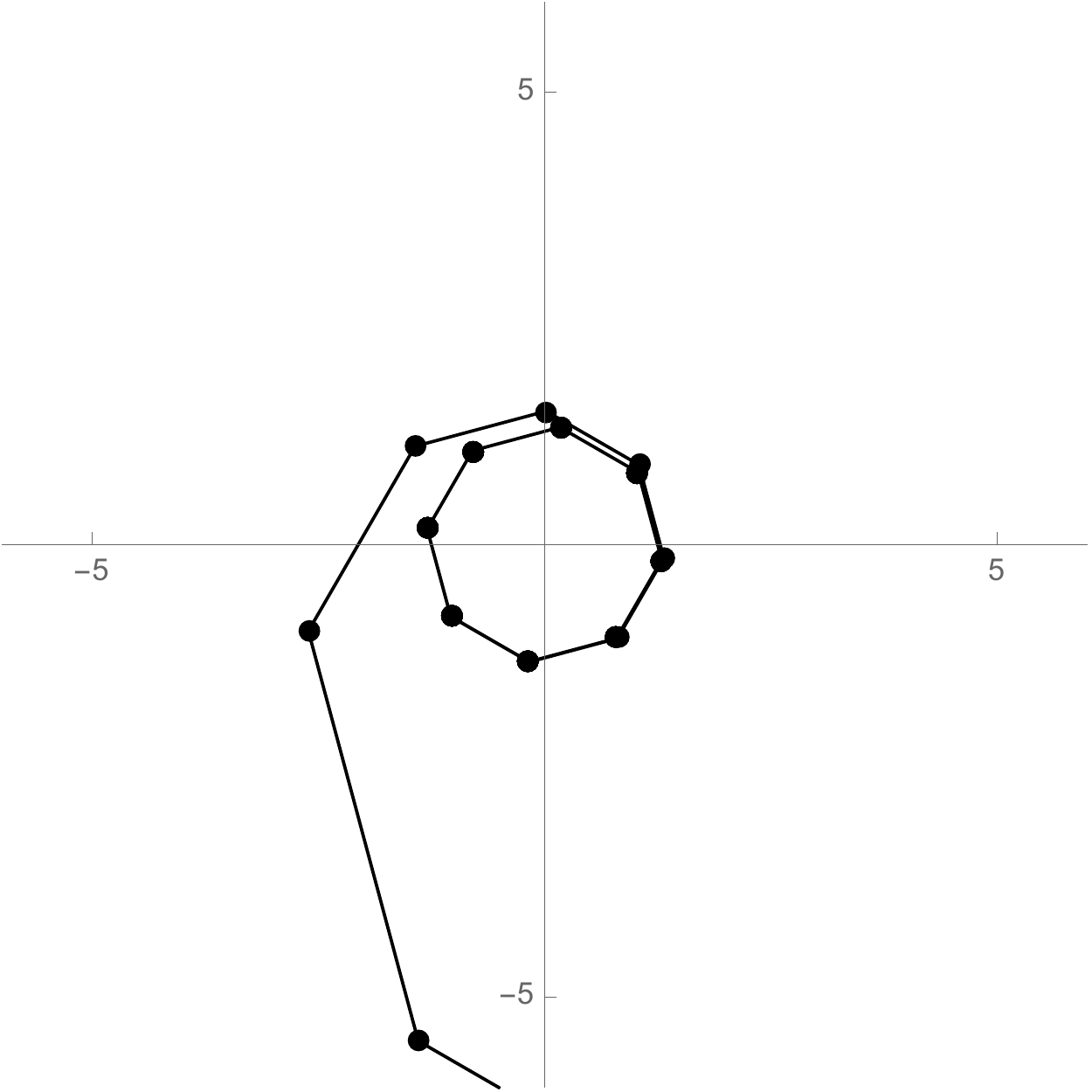}
\end{minipage}
\begin{minipage}{.3\linewidth}
\includegraphics[width=\linewidth]{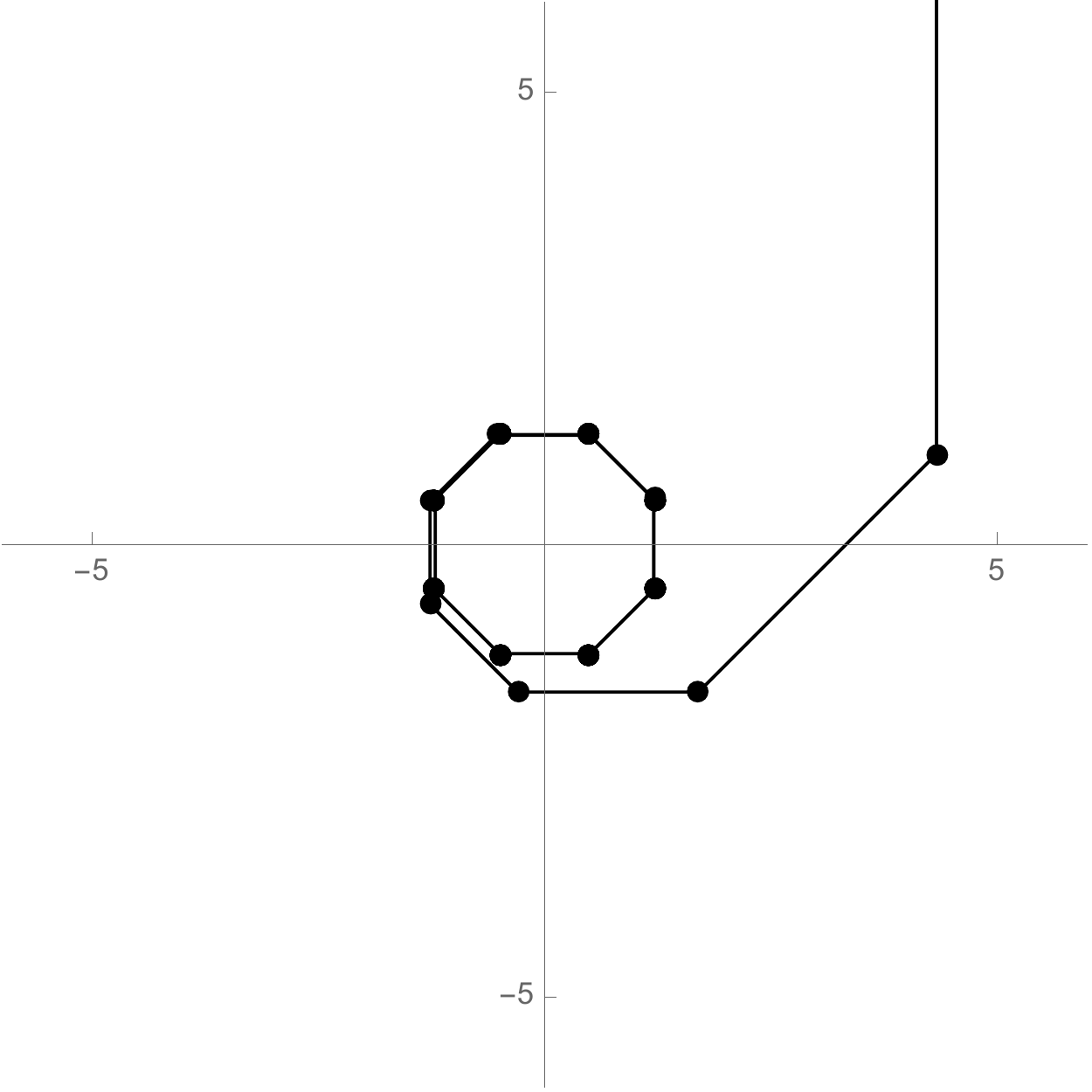}
\end{minipage}
\begin{minipage}{.3\linewidth}
\includegraphics[width=\linewidth]{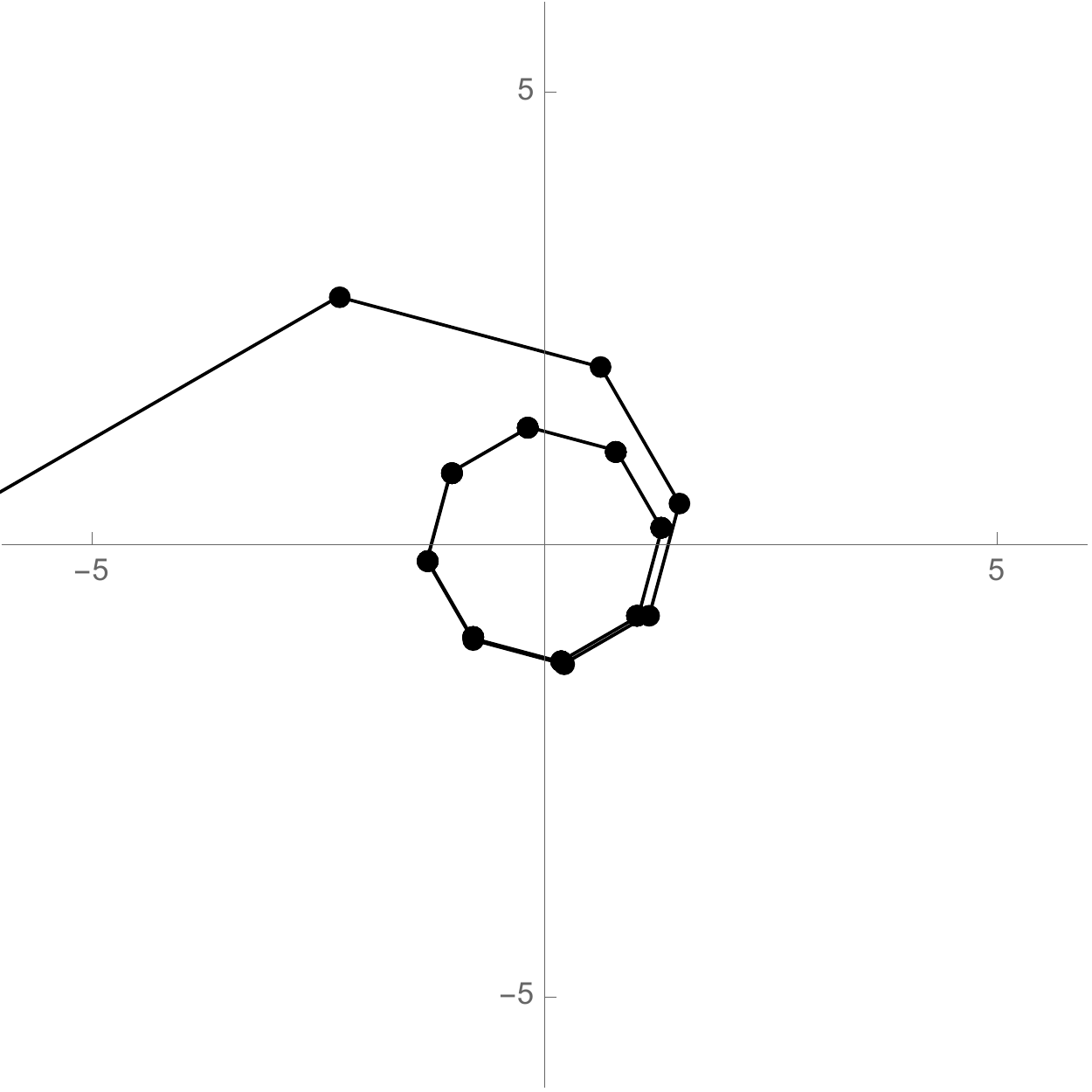}
\end{minipage}
\caption{Motion of discrete plane curves $e^{- \mu_0 m} \gamma^m_n$
corresponding to a 1-shock wave solution of the discrete Burgers
equation \eqref{eqn:auto_dBurgers_with_a}.  Parameters are $\lambda_1 =
1$, and $m=-8$ (left), $0$ (middle), $8$ (right).}
\label{fig:dBurgers_curve-1}
\end{figure}
\end{center}
\begin{center}
\begin{figure}[h]
\begin{minipage}{.3\linewidth}
\includegraphics[width=\linewidth]{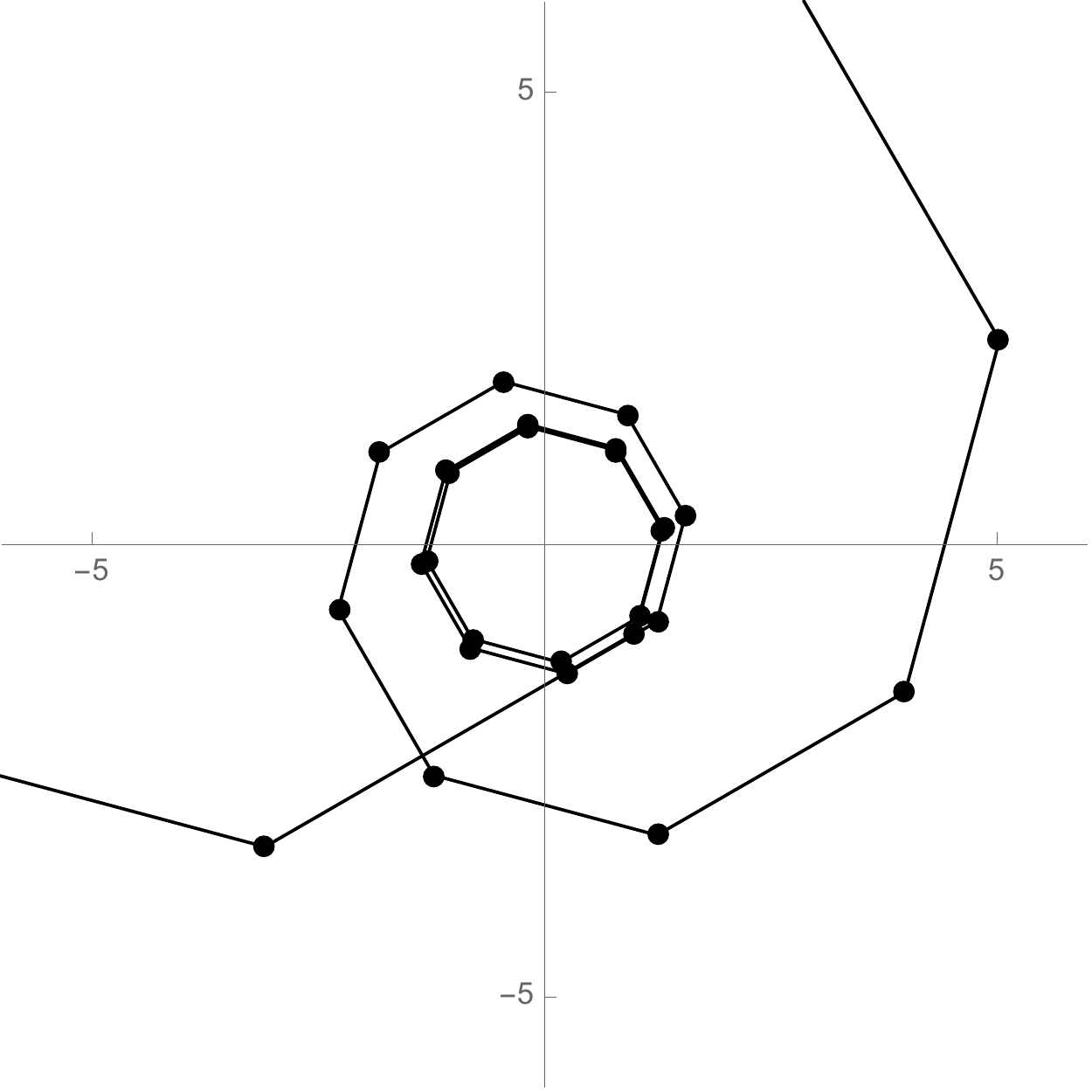}
\end{minipage}
\begin{minipage}{.3\linewidth}
\includegraphics[width=\linewidth]{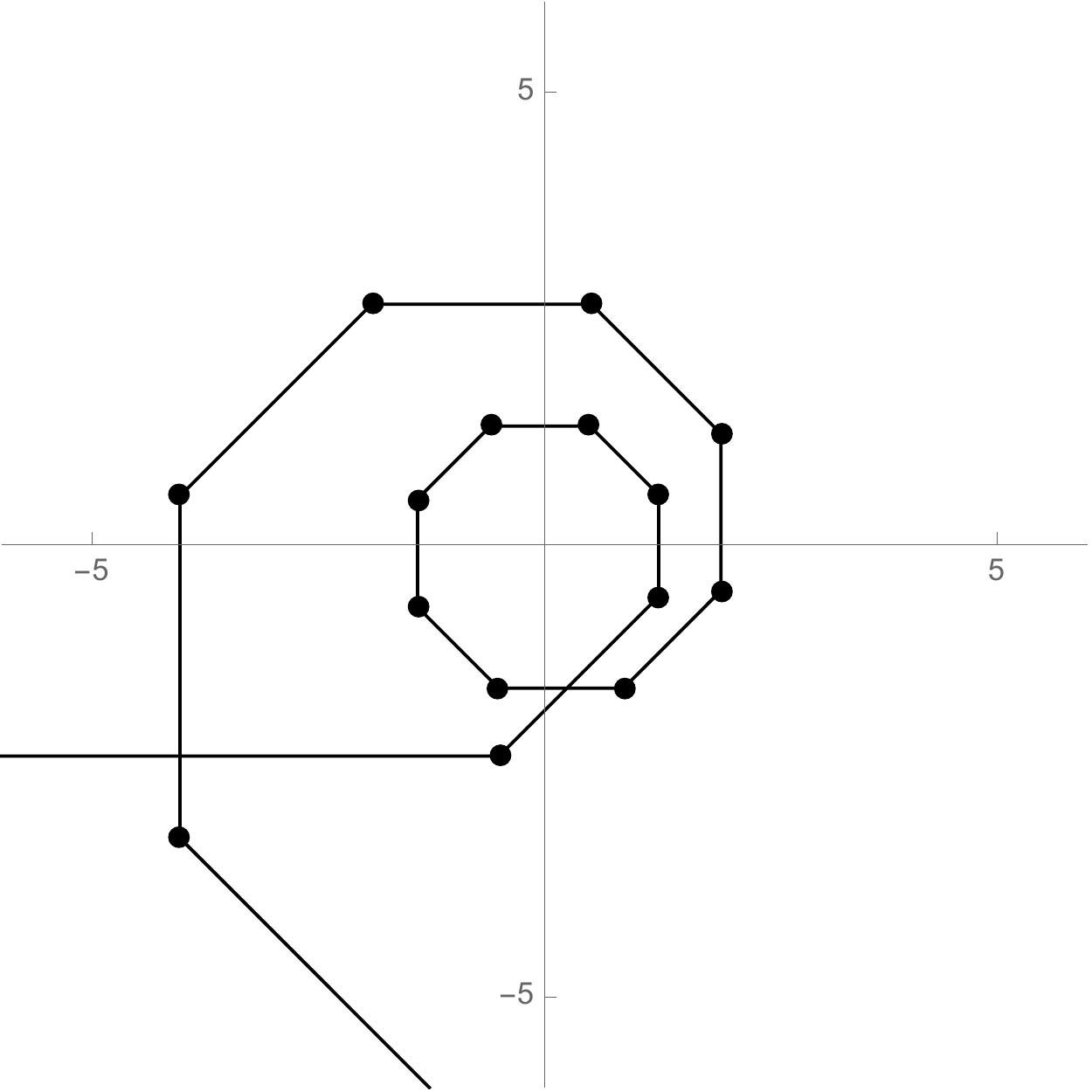}
\end{minipage}
\begin{minipage}{.3\linewidth}
\includegraphics[width=\linewidth]{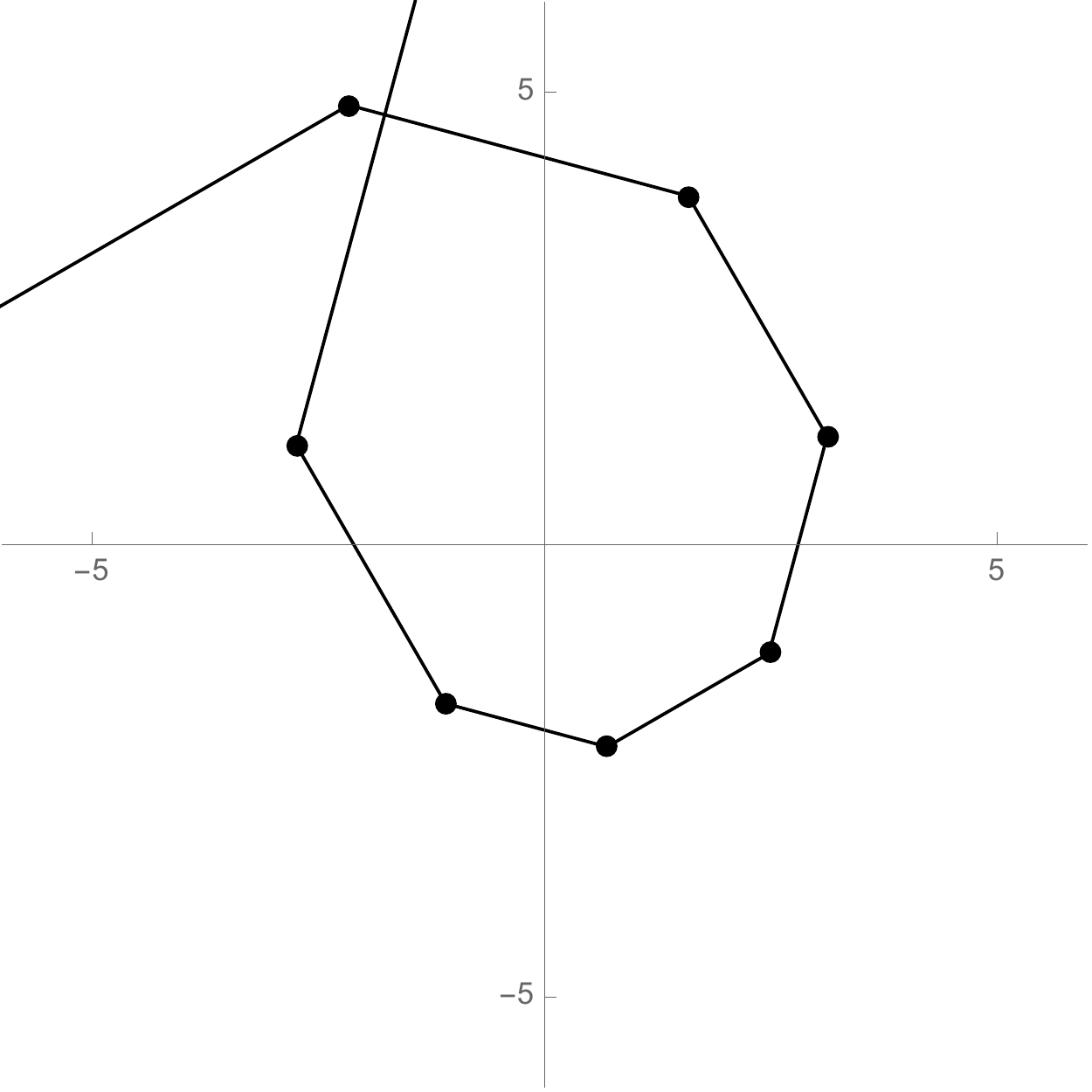}
\end{minipage}
\caption{Motion of discrete plane curves $e^{- \mu_0 m} \gamma^m_n$
corresponding to a 2-shock wave solution of the discrete Burgers
equation \eqref{eqn:auto_dBurgers_with_a}.  Parameters are $\lambda_1 =
1/3$, $\lambda_2 = - 3$, $\xi_2=0$, and $m=-13$ (left), $-6$ (middle),
$-1$ (right).}  \label{fig:dBurgers_curve-2}
\end{figure}
\end{center}

\noindent
\textbf{Acknowledgements.}

The authors would like to thank Professor Jun-ichi Inoguchi for valuable
suggestions. This work was partially supported by JSPS KAKENHI Grant
Number 15K04862.

%
%
\appendix

\section{Burgers hierarchy}\label{app:hier}
The Burgers hierarchy is the family of nonlinear partial differential
equations obtained from the linear partial differential equations
\begin{equation}\label{eqn:linear}
\frac{\partial q}{\partial t_i} = \frac{\partial^i q}{\partial
x^i},\quad
i=1,2,3,\ldots
\end{equation}
through the Cole-Hopf transformation
\begin{equation}\label{eqn:Cole-Hopf}
 u = -\frac{\partial }{\partial x}\log q.
\end{equation}
By noticing that
\begin{equation}\label{eqn:Cole-Hopf_inverse} 
q = e^{-\int u~dx},
\end{equation}
the nonlinear equations in the hierachy are expressed \cite{Choodnovsky_Choodnovsky}
as
\begin{equation}\label{eqn:higher_order_Burgers}
\begin{split}
& \frac{\partial u}{\partial t_i}= K_i[u], \\
& K_i[u]= -\frac{\partial }{\partial x}
\left(e^{\int u\,dx} \frac{\partial^i}{\partial x^i}e^{-\int u\,dx}\right).
\end{split}
\end{equation}
Some of the flows of the hierarchy are given by
\begin{align*}
i=1: &\quad  K_1[u]=u',\\
i=2: &\quad  K_2[u]=u''-2uu',\\
i=3: &\quad  K_3[u]=u'''-(u')^2-uu''+3u^2u'.
\end{align*}
An elementary calculation shows the following relation between $K_i[u]$ and $K_{i-1}[u]$:
\begin{equation}
 K_{i}[u] = \Omega K_{i-1}[u], \quad \Omega = \partial_x -u - u'\partial_x^{-1}.
\end{equation}
Here, $\Omega$ is called the recursion operator of the Burgers
hierarchy, by which the equations in the hierarchy can be expressed as
\begin{equation}
  \frac{\partial u}{\partial t_i} = \Omega^{i-1}K_1[u] = \Omega^{i-1}u',\quad i\geq 2.
\end{equation}
\section{Discrete Burgers hierachy}\label{sec:discrete_Burgers}
\subsection{Discrete Burgers hierarchy}
Let $\delta, \epsilon$ be constants.
For $i = 0, 1, 2, \ldots$,
we consider the family of linear difference equations
\begin{equation}\label{eqn:linear_autonomous}
\dfrac{q^{m + 1}_n - q^{m}_n}{\delta}
=\widehat{L}^{(i)}[q_n^m],
\end{equation}
where
\begin{equation*}
\widehat{L}^{(i)}[q_n^m]=
\begin{cases}
\Delta^i q^m_n & i=2l,\\
e^{\partial_n/2} \Delta^i q^m_n & i=2l+1.
\end{cases}
\end{equation*}
Here $\Delta$ is a central-difference operator in $n$ defined as
\begin{equation*}
\Delta = \dfrac{e^{\partial_n/2}
- e^{- \partial_n/2}}{\epsilon}.
\end{equation*}
The first few examples of $\widehat{L}^{(i)}[q_n^m]$ are given by
\begin{align*}
i=0: &\quad \widehat{L}^{(0)} \left[q_n^m\right] = q_n^m,\\
i=1: &\quad \widehat{L}^{(1)} \left[q_n^m\right] = \frac{q_{n+1}^m - q_n^m}{\epsilon},\\
i=2: &\quad \widehat{L}^{(2)} \left[q_n^m\right] =\frac{q_{n+1}^m - 2q_n^m + q_{n-1}^m}{\epsilon^2},\\
i=3: &\quad \widehat{L}^{(3)} \left[q_n^m\right] = \frac{q_{n+2}^m - 3q_{n+1}^m + 3q_n^m - q_{n-1}^m}{\epsilon^3}.
\end{align*}
The \textit{discrete Burgers hierarchy} is a family of nonlinear
difference equations obtained from \eqref{eqn:linear_autonomous} by the
discrete Cole-Hopf transformation \cite{Fujioka-Kurose:Burgers}
\begin{equation}\label{eqn:discrete_Cole-Hopf}
u^{m}_n = \dfrac{q^{m}_{n + 1}}{q^{m}_n}.
\end{equation}
Noticing that 
\begin{equation*}
q^{m}_n = \prod_{k}^{n - 1} u^{m}_k,
\end{equation*}
the $i$-th order equation in the hierarchy can be written as
\begin{equation}\label{d-burgers-higher}
\dfrac{u^{m + 1}_n}{u^{m}_n}
=
\dfrac{1 + \delta \widehat{K}^{(i)} \left[u_{n+1}^m\right]}%
{1 + \delta \widehat{K}^{(i)} \left[u_n^m\right]},
\end{equation}
where
\begin{equation*}
\widehat{K}^{(i)} \left[u_n^m\right]
=
\frac{1}{q_n^m}
\widehat{L}^{(i)} \left[q_n^m\right].
\end{equation*}
For instance, the first few $\widehat{K}^{(i)}$ are given by
\begin{align*}
i=0: &\quad \widehat{K}^{(0)} \left[u_n^m\right] = 1,\\
i=1: &\quad \widehat{K}^{(1)} \left[u_n^m\right]
=\dfrac{1}{\epsilon}\left(u^{m}_n - 1\right),\\
i=2: &\quad \widehat{K}^{(2)} \left[u_n^m\right]
= \dfrac{1}{\epsilon^2}\left(u^{m}_n - 2 + \dfrac{1}{u^{m}_{n - 1}}\right),\\
i=3: &\quad \widehat{K}^{(3)} \left[u_n^m\right]
= \dfrac{1}{\epsilon^3}\left(u^{m}_{n + 1} u^{m}_n - 3 u^{m}_n
+ 3 - \dfrac{1}{u^{m}_{n - 1}}\right).
\end{align*}
The discrete Burgers hierarchy admits the {\em recursion operators}
which generate higher order flows from lower ones.
\begin{prop}
It holds that 
\begin{equation*}
\widehat{K}^{(i+1)}[u_n^m] = \Omega_n\, \widehat{K}^{(i)}[u_n^m],
\end{equation*}
where $\Omega_n$ is a difference operator defined by
\begin{equation}\label{eqn:recursion_autonomous1}
\Omega_n
=
\begin{cases}
\Omega_n^{(\mathrm{odd})} = \dfrac{1}{\epsilon}
\left(u^m_n e^{\partial_n} - 1\right) & i=2l,\\
\Omega_n^{(\mathrm{even})} = \dfrac{1}{\epsilon}
\left(1 - \dfrac{1}{u^m_{n - 1}} e^{- \partial_n}\right) & i=2l+1.
\end{cases}
\end{equation}
In particular, we have
\begin{equation}\label{eqn:recursion_autonomous3}
\widehat{K}^{(i+2)} \left[u_n^m\right]
=\Omega_n^{(2)} \widehat{K}^{(i)} \left[u_n^m\right],
\end{equation}
where 
\begin{equation}\label{eqn:recursion_autonomous2}
\Omega_n^{(2)}
=
\Omega_n^{(\mathrm{odd})}
\Omega_n^{(\mathrm{even})}
=
\Omega_n^{(\mathrm{even})}
\Omega_n^{(\mathrm{odd})}
=
\dfrac{1}{\epsilon^2}
\left(u^m_n e^{\partial_n}
- 2 + \dfrac{1}{u^m_{n - 1}} e^{- \partial_n}\right).
\end{equation}
\end{prop}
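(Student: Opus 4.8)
The plan is to prove the step-up relation first at the level of the linear operators $\widehat{L}^{(i)}$ and then transfer it to the nonlinear operators $\widehat{K}^{(i)}$ via the discrete Cole-Hopf transformation $u_n^m=q_{n+1}^m/q_n^m$. The essential structural point is that the parity-dependent definition of $\widehat{L}^{(i)}$ (namely $\Delta^i$ for even $i$ and $e^{\partial_n/2}\Delta^i$ for odd $i$) is arranged precisely so that the half-integer shifts $e^{\pm\partial_n/2}$ always recombine into integer shifts. I would begin by recording the two operator identities
\[
e^{\partial_n/2}\,\Delta=\tfrac{1}{\epsilon}\left(e^{\partial_n}-1\right),\qquad \Delta\,e^{-\partial_n/2}=\tfrac{1}{\epsilon}\left(1-e^{-\partial_n}\right),
\]
which follow at once from $\Delta=(e^{\partial_n/2}-e^{-\partial_n/2})/\epsilon$.

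Using these, I would establish the recursion for $\widehat{L}^{(i)}$ by distinguishing parity. For even $i$ we have $\widehat{L}^{(i)}[q_n^m]=\Delta^i q_n^m$ and $\widehat{L}^{(i+1)}[q_n^m]=e^{\partial_n/2}\Delta^{i+1}q_n^m$, so the first identity gives $\widehat{L}^{(i+1)}[q_n^m]=\tfrac1\epsilon(\widehat{L}^{(i)}[q_{n+1}^m]-\widehat{L}^{(i)}[q_n^m])$; for odd $i$ the second identity yields the mirror relation $\widehat{L}^{(i+1)}[q_n^m]=\tfrac1\epsilon(\widehat{L}^{(i)}[q_n^m]-\widehat{L}^{(i)}[q_{n-1}^m])$. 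Dividing through by $q_n^m$ then converts these into the $\widehat{K}$ recursion: in the even case I rewrite $\widehat{L}^{(i)}[q_{n+1}^m]/q_n^m=(q_{n+1}^m/q_n^m)\,\widehat{L}^{(i)}[q_{n+1}^m]/q_{n+1}^m=u_n^m\,\widehat{K}^{(i)}[u_{n+1}^m]$, which produces $\widehat{K}^{(i+1)}[u_n^m]=\tfrac1\epsilon(u_n^m e^{\partial_n}-1)\widehat{K}^{(i)}[u_n^m]=\Omega_n^{(\mathrm{odd})}\widehat{K}^{(i)}[u_n^m]$; in the odd case the analogous rewriting $\widehat{L}^{(i)}[q_{n-1}^m]/q_n^m=(1/u_{n-1}^m)\,\widehat{K}^{(i)}[u_{n-1}^m]$ gives $\Omega_n^{(\mathrm{even})}$. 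This establishes the first displayed formula.

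For the ``in particular'' statement I would simply iterate twice: starting from either parity of $i$, applying the step-up relation once and then once more places $\Omega_n^{(\mathrm{even})}\Omega_n^{(\mathrm{odd})}$ or $\Omega_n^{(\mathrm{odd})}\Omega_n^{(\mathrm{even})}$ in front of $\widehat{K}^{(i)}$. It then remains to check that both compositions coincide and equal $\Omega_n^{(2)}$. This is a direct operator computation using $e^{\partial_n}(1/u_{n-1}^m)=(1/u_n^m)e^{\partial_n}$ and $e^{-\partial_n}u_n^m=u_{n-1}^m e^{-\partial_n}$, under which the two cross terms each collapse to $-1$ and the result is $\tfrac1{\epsilon^2}(u_n^m e^{\partial_n}-2+(1/u_{n-1}^m)e^{-\partial_n})$ regardless of the order, proving both $\Omega_n^{(2)}=\Omega_n^{(\mathrm{odd})}\Omega_n^{(\mathrm{even})}=\Omega_n^{(\mathrm{even})}\Omega_n^{(\mathrm{odd})}$ and $\widehat{K}^{(i+2)}=\Omega_n^{(2)}\widehat{K}^{(i)}$.

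The only \emph{delicate point} is the bookkeeping of the half-shifts: $e^{\pm\partial_n/2}$ has no pointwise meaning on the integer lattice, so I must retain the alternating even/odd definition of $\widehat{L}^{(i)}$ and never split an odd power of $\Delta$ on its own. The two operator identities displayed above are exactly what guarantees that the half-shifts occur only in the combinations $e^{\partial_n/2}\Delta$ and $\Delta e^{-\partial_n/2}$ that collapse to integer shifts; once they are in place, and once one is careful that $e^{\pm\partial_n}$ acts on $\widehat{K}^{(i)}[u_n^m]$ by shifting its index, the remaining manipulations are routine algebra.
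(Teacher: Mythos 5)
Your proof is correct and takes essentially the same approach as the paper: both establish the parity-dependent recursions $\widehat{L}^{(2l+1)}[q_n^m]=\tfrac{1}{\epsilon}\left(e^{\partial_n}-1\right)\widehat{L}^{(2l)}[q_n^m]$ and $\widehat{L}^{(2l+2)}[q_n^m]=\tfrac{1}{\epsilon}\left(1-e^{-\partial_n}\right)\widehat{L}^{(2l+1)}[q_n^m]$ at the linear level, then divide by $q_n^m$ and use the Cole--Hopf relation to turn the shifted $q$'s into the factors $u_n^m$ and $1/u_{n-1}^m$, yielding $\Omega_n^{(\mathrm{odd})}$ and $\Omega_n^{(\mathrm{even})}$. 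The only differences are cosmetic: you derive the $\widehat{L}$-recursions explicitly from the half-shift identities and spell out the commutation check for $\Omega_n^{(2)}$, which the paper dismisses as ``a straightforward calculation.''
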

\begin{proof}
Since we have
\begin{equation*}
\widehat{L}^{(2l+1)}[q_n^m] = \frac{e^{\partial_n} - 1}{\epsilon}\,\widehat{L}^{(2l)}[q_n^m],\quad
\widehat{L}^{(2l+2)}[q_n^m]=\frac{1 - e^{-\partial_n}}{\epsilon}\, \widehat{L}^{(2l+1)}[q_n^m],
\end{equation*}
from \eqref{eqn:linear_autonomous},
we obtain
\begin{align*}
q_n^m \widehat{K}^{(2l+1)}[q_n^m]
&= \frac{e^{\partial_n} - 1}{\epsilon} \left(q_n^m \widehat{K}^{(2l)}[u_n^m]\right)
= \frac{1}{\epsilon}\left(q_{n+1}^me^{\partial_n} - q_n^m\right)\widehat{K}^{(2l)}[q_n^m],\\
q_n^m \widehat{K}^{(2l+2)}[u_n^m]
&=\frac{1 - e^{-\partial_n}}{\epsilon} \left(q_n^m\widehat{K}^{(2l+1)}[u_n^m]\right)
=\frac{1}{\epsilon}\left(q_n^m - q_{n-1}^me^{-\partial_n}\right)\widehat{K}^{(2l+1)}[u_n^m],
\end{align*}
which immediately yields \eqref{eqn:recursion_autonomous1}. The second half of the statement can be verified by
a straightforward calculation.
\end{proof}
By using the recursion operators \eqref{eqn:recursion_autonomous1}
and \eqref{eqn:recursion_autonomous2},
$\widehat{K}^{(i)}[u_n^m]$ can be expressed as
\begin{equation*}
\widehat{K}^{(i)} \left[u_n^m\right]
=
\left\{
\begin{array}{ll}\smallskip
\left(\Omega_n^{(2)}\right)^{l}\, 1 & i=2l,\\
\left(\Omega_n^{(2)}\right)^{l}\Omega_n^{({\rm odd})}\, 1 & i=2l+1.
\end{array}
\right.
\end{equation*}
\subsection{Non-autonomous discrete Burgers hierarchy}
We formulate the discrete Burgers hierarchy with arbitrary lattice
intervals, which we call {\em non-autonomous} discrete Burgers
hierarchy. The hierarchy introduced in the previous
section is sometimes referred to as the {\em autonomous} discrete
Burgers hierarchy. We first introduce the {\em divided difference}
$f[x_j,x_{j+1},\ldots,x_{j+n}]$ of the function $f(x)$ with the base
points $x_j,x_{j+1},\ldots,x_{j+n}$ recursively by
$f[x_j] = f(x_j)$ and
\begin{align*}
\text{first order:}\quad & f[x_j,x_{j+1}] =
\frac{f[x_{j+1}] - f[x_j]}{x_{j+1}-x_j},\\
\text{second order:}\quad & f[x_j,x_{j+1},x_{j+2}] = \frac{f[x_{j+1},x_{j+2}]-f[x_{j},x_{j+1}]}{x_{j+2}-x_j},\\
\text{third order:}\quad & f[x_j,x_{j+1},x_{j+2},x_{j+3}] = \frac{f[x_{j+1},x_{j+2},x_{j+3}]-f[x_{j},x_{j+1},x_{j+2}]}{x_{j+3}-x_j},\\
\text{$n$-th order:}\quad & f[x_j,x_{j+1},\ldots,x_{j+n}] = \frac{f[x_{j+1},\ldots,x_{j+n}]-f[x_{j},\ldots,x_{j+n-1}]}{x_{j+n}-x_j}.
\end{align*}
Among the various properties of the divided differences, we here note the following:
\begin{enumerate}
 \item {\em Expansion formula}. 
\begin{equation*}
f[x_j,x_{j+1},\ldots,x_{j+n}] =
\sum_{k=0}^n \frac{f(x_{j+k})}{\prod_{s=0,\, s\neq k}^{n}\,
(x_{j+k}-x_{j+s})}.
\end{equation*}
For example, we have
\begin{small}
\begin{align*}
f[x_j,x_{j+1}] =\;& \frac{f(x_{j+1})}{x_{j+1}-x_j}
+ \frac{f(x_j)}{x_j-x_{j+1}},\\
f[x_j,x_{j+1},x_{j+2}] =\;& \frac{f(x_{j+2})}{(x_{j+2}-x_{j+1})(x_{j+2}-x_j)} + \frac{f(x_{j+1})}{(x_{j+1}-x_{j+2})(x_{j+1}-x_j)}
+ \frac{f(x_{j})}{(x_{j}-x_{j+2})(x_{j}-x_{j+1})},\\
\begin{split}
 f[x_j,x_{j+1},x_{j+2},x_{j+3}] =\;&
  \frac{f(x_{j+3})}{(x_{j+3}-x_{j+2})(x_{j+3}-x_{j+1})(x_{j+3}-x_{j})}
+ \frac{f(x_{j+2})}{(x_{j+2}-x_{j+3})(x_{j+2}-x_{j+1})(x_{j+2}-x_{j})}\\
& + \frac{f(x_{j+1})}{(x_{j+1}-x_{j+3})(x_{j+1}-x_{j+2})(x_{j+1}-x_{j})}
+ \frac{f(x_{j})}{(x_{j}-x_{j+3})(x_{j}-x_{j+2})(x_{j}-x_{j+1})}.
\end{split}
\end{align*}
\end{small}
As an immediate consequence, it follows that
$f[x_j,x_{j+1},\ldots,x_{j+n}]$ is invariant with respect to
interchanging the base points.

\item {\em Autonomization and continuous limit}.
Putting the lattice interval to be constant, namely,
$x_{j+k}=x_j + k\epsilon$, it follows that
\begin{equation}\label{eqn:divided_diffrence_and_derivative}
f[x_j,x_{j+1},\ldots,x_{j+n}] = \frac{1}{n!}\Delta_{+x}^n\, f(x_j),\quad \Delta_{+x}\,f(x) = \frac{f(x+\epsilon)-f(x)}{\epsilon},
\end{equation}
and thus
\begin{equation*}
f[x_j,x_{j+1},\ldots,x_{j+n}]  \ \longrightarrow\ \frac{1}{n!}\frac{d^nf(x_j)}{dx^n}\quad (\epsilon\to 0).
\end{equation*}
\end{enumerate}
In order to formulate the non-autonomous discrete Burgers hierarchy,
we first introduce the family of
linear difference equations for $q_n^m=q(x_n,t_m)$,\,
$\delta_m=t_{m+1}-t_m$:
\begin{equation}\label{eqn:nonautonomous_linear_eq}
\frac{q_n^{m+1}-q_n^m}{\delta_m} = L_n^{(i)}[q_n^m],
\end{equation}
where
\begin{equation*}
L_n^{(i)}[q_n^m] =
\begin{cases}
q[x_{n-l},\ldots,x_{n+l}] & i=2l,\\
q[x_{n-l},\ldots,x_{n+l+1}] & i=2l+1.
\end{cases}
\end{equation*}
The first few examples of $L_n^{(i)}[q_n^m]$ are given by
\begin{align*}
i=0:\ 
L_n^{(0)}[q_n^m] =\;& q^m_n,\\
i=1:\ 
L_n^{(1)}[q_n^m] =\;&
\frac{q_{n+1}^m}{x_{n+1}-x_n} +\frac{q_n^m}{x_{n}-x_{n+1}},\\
i=2:\ 
L_n^{(2)}[q_n^m] =\;&
  \frac{q_{n+1}^m}{(x_{n+1}-x_{n})(x_{n+1}-x_{n-1})} 
+ \frac{q_{n}^m}{(x_{n}-x_{n+1})(x_{n}-x_{n-1})} 
+ \frac{q_{n-1}^m}{(x_{n-1}-x_{n+1})(x_{n-1}-x_{n})} ,\\
i=3:\ 
L_n^{(3)}[q_n^m] =\;&
  \frac{q_{n+2}^m}{(x_{n+2}-x_{n+1})(x_{n+2}-x_{n})(x_{n+2}-x_{n-1})}
+ \frac{q_{n+1}^m}{(x_{n+1}-x_{n+2})(x_{n+1}-x_{n})(x_{n+1}-x_{n-1})}\\
& 
+ \frac{q_{n}^m}{(x_{n}-x_{n+2})(x_{n}-x_{n+1})(x_{n}-x_{n-1})}
+ \frac{q_{n-1}^m}{(x_{n-1}-x_{n+2})(x_{n-1}-x_{n+1})(x_{n-1}-x_{n})}.
\end{align*}
We note that the following recursion relations hold:
\begin{align}
 L_n^{(2l+1)}[q_n^m] &=\label{eqn:nonautonomous_linear_recursion1}
\frac{q[x_{n-l+1},\ldots,x_{n+l+1}] - q[x_{n-l},\ldots,x_{n+l}]}{x_{n+l+1}-x_{n-l}} =\frac{L_{n+1}^{(2l)}[q_{n+1}^m] - L_n^{(2l)}[q_{n}^m]}{x_{n+l+1}-x_{n-l}},\\
L_n^{(2l+2)}[q_n^m] &=\label{eqn:nonautonomous_linear_recursion2}
\frac{q[x_{n-l},\ldots,x_{n+l+1}] - q[x_{n-l-1},\ldots,x_{n+l}]}{x_{n+l+1}-x_{n-l-1}} =\frac{L_n^{(2l+1)}[q_{n}^m] - L_{n-1}^{(2l+1)}[q_{n-1}^m]}{x_{n+l+1}-x_{n-l-1}}.
\end{align}
The non-autonomous discrete Burgers hierarchy is a family of nonlinear
difference equations obtained from \eqref{eqn:nonautonomous_linear_eq}
by the discrete Cole-Hopf transformation \eqref{eqn:discrete_Cole-Hopf}.
The $i$-th order equation in the hierarchy is given as
\begin{equation}\label{eqn:nonautonomous_discrete_Burgers}
 \frac{u_n^{m+1}}{u_n^m} = \frac{1 + \delta_m K_{n+1}^{(i)}[u_{n+1}^m]}{1+ \delta_m K_n^{(i)}[u_{n}^m]},
\end{equation}
where
\begin{equation*}
K_n^{(i)}[u_n^m]
= \frac{1}{q^m_n} L_n^{(i)}[q_n^m].
\end{equation*}
The recursion operator for the non-autonomous discrete Burgers hierarchy
is given as follows:
\begin{prop}\label{prop:recursion_nonautonomous_Burgers}
It holds that
\begin{equation*}
 K_n^{(i+1)}[u_n^m] = \Omega^{(1,i+1)}_n\, K_n^{(i)}[u_n^m],
\end{equation*}
where $\Omega_n^{(1,i+1)}$ is a difference operator defined by
\begin{equation}\label{eqn:recursion_nonautonomous1}
\Omega_n^{(1,i+1)}
=
\begin{cases}
\dfrac{1}{\epsilon_n^{(i+1)}}\left(u_n^m e^{\partial}-1\right) & i=2l,\\
\dfrac{1}{\epsilon_n^{(i+1)}}\left(1-\dfrac{1}{u_{n-1}^m}
e^{-\partial_n}\right) & i=2l+1,
\end{cases}
\end{equation}
and
\begin{equation}\label{eqn:nonautonomous_epsilon}
\epsilon_n^{(i+1)} =
\begin{cases}
x_{n+l+1}-x_{n-l} & i=2l,\\
x_{n+l+1}-x_{n-l-1} & i=2l+1.
\end{cases}
\end{equation}
In particular, we have 
\begin{equation*}
K_n^{(i+2)} \left[u_n^m\right]
=\Omega_n^{(2,i+2)} K_n^{(i)} \left[u^m_n\right],
\end{equation*}
where 
\begin{align}
\Omega_n^{(2,i+2)}
&=\nonumber
\Omega_n^{(1,i+2)} \Omega_n^{(1,i+1)}\\
&=\label{eqn:recursion_nonautonomous_Burgers}
\begin{cases}
\dfrac{1}{\epsilon_n^{(i+2)}}\left(
\dfrac{u_n^m}{\epsilon_n^{(i+1)}} e^{\partial_n}
- \dfrac{1}{\epsilon_n^{(i+1)}} - \dfrac{1}{\epsilon_{n-1}^{(i+1)}}
+ \dfrac{1}{\epsilon_{n-1}^{(i+1)} u_{n-1}^m} e^{-\partial_n}
\right) & i=2l,\\
\dfrac{1}{\epsilon_n^{(i+2)}}\left(
\dfrac{u_n^m}{\epsilon_{n+1}^{(i+1)}} e^{\partial_n}
- \dfrac{1}{\epsilon_{n+1}^{(i+1)}} - \dfrac{1}{\epsilon_{n}^{(i+1)}}
+\dfrac{1}{\epsilon_{n}^{(i+1)}} \dfrac{1}{u_{n-1}^m} e^{-\partial_n}
\right) & i=2l+1.
\end{cases}
\end{align}
\end{prop}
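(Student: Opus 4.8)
The plan is to follow the same route as the proof of the autonomous recursion relation given above, replacing the constant-coefficient shift identities there by the non-autonomous recursion relations \eqref{eqn:nonautonomous_linear_recursion1} and \eqref{eqn:nonautonomous_linear_recursion2} for the divided-difference operators $L_n^{(i)}$. The only additional ingredients are the defining relation $K_n^{(i)}[u_n^m]=\frac{1}{q_n^m}L_n^{(i)}[q_n^m]$ together with the discrete Cole--Hopf transformation \eqref{eqn:discrete_Cole-Hopf}, which supplies $q_{n+1}^m/q_n^m=u_n^m$ and $q_{n-1}^m/q_n^m=1/u_{n-1}^m$. These suffice to pass from the linear ($L$-)recursions to the nonlinear ($K$-)recursions.

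For the first-order statement I would treat the two parities of $i$ separately. When $i=2l$, dividing \eqref{eqn:nonautonomous_linear_recursion1} by $q_n^m$ and substituting $L_{n+1}^{(2l)}[q_{n+1}^m]=q_{n+1}^m K_{n+1}^{(2l)}[u_{n+1}^m]$ and $L_n^{(2l)}[q_n^m]=q_n^m K_n^{(2l)}[u_n^m]$ yields
\[
K_n^{(2l+1)}[u_n^m]=\frac{1}{x_{n+l+1}-x_{n-l}}\left(u_n^m K_{n+1}^{(2l)}[u_{n+1}^m]-K_n^{(2l)}[u_n^m]\right),
\]
which is precisely $\Omega_n^{(1,2l+1)}K_n^{(2l)}[u_n^m]$ once one identifies $x_{n+l+1}-x_{n-l}=\epsilon_n^{(2l+1)}$ via \eqref{eqn:nonautonomous_epsilon} and compares with the forward form in \eqref{eqn:recursion_nonautonomous1}. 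The case $i=2l+1$ is entirely analogous, starting from \eqref{eqn:nonautonomous_linear_recursion2}, dividing by $q_n^m$, and using $q_{n-1}^m/q_n^m=1/u_{n-1}^m$; this produces the backward form $\frac{1}{\epsilon_n^{(2l+2)}}\left(1-\frac{1}{u_{n-1}^m}e^{-\partial_n}\right)$, establishing $K_n^{(i+1)}[u_n^m]=\Omega_n^{(1,i+1)}K_n^{(i)}[u_n^m]$ in both parities.

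For the second-order statement I would simply compose, writing $K_n^{(i+2)}=\Omega_n^{(1,i+2)}\Omega_n^{(1,i+1)}K_n^{(i)}$ and multiplying out the two first-order operators just obtained. The one delicate point is that the shift operators must be commuted past the $n$-dependent coefficients according to $e^{\mp\partial_n}c_n=c_{n\mp1}e^{\mp\partial_n}$, applied to both $u_n^m$ and $1/\epsilon_n^{(i+1)}$. For $i=2l$ this converts $e^{-\partial_n}\frac{1}{\epsilon_n^{(i+1)}}\left(u_n^m e^{\partial_n}-1\right)$ into $\frac{1}{\epsilon_{n-1}^{(i+1)}}\left(u_{n-1}^m-e^{-\partial_n}\right)$, and collecting terms reproduces \eqref{eqn:recursion_nonautonomous_Burgers} exactly, including the shifted denominators $1/\epsilon_{n-1}^{(i+1)}$; the odd case $i=2l+1$ generates the shift $\epsilon_n^{(i+1)}\mapsto\epsilon_{n+1}^{(i+1)}$ in the same way.

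The main obstacle, and the sole substantive difference from the autonomous argument, is precisely this bookkeeping of the shift operators past the now $n$-dependent lattice intervals $\epsilon_n^{(i+1)}$. In the autonomous case these were constant and commuted freely with $e^{\pm\partial_n}$, so the composition was immediate; here one must track the index shifts $\epsilon_n^{(i+1)}\mapsto\epsilon_{n\mp1}^{(i+1)}$ carefully, but no genuinely new idea is required beyond this careful indexing.
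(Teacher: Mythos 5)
Your proposal is correct and follows essentially the same route as the paper's proof: pass from the divided-difference recursions \eqref{eqn:nonautonomous_linear_recursion1}--\eqref{eqn:nonautonomous_linear_recursion2} for $L_n^{(i)}$ to the nonlinear recursions for $K_n^{(i)}$ by dividing by $q_n^m$ and invoking the discrete Cole--Hopf relations, then obtain $\Omega_n^{(2,i+2)}$ by composing the two first-order operators. The only difference is one of detail: the paper dismisses the composition as ``a direct computation,'' whereas you spell out the key bookkeeping step $e^{\mp\partial_n}c_n=c_{n\mp1}e^{\mp\partial_n}$ for the $n$-dependent intervals $\epsilon_n^{(i+1)}$, which is exactly what that computation requires.
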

\begin{proof}
The first half of the statement follows from the recursion relation of
the divided differences.
Indeed, it follows from \eqref{eqn:nonautonomous_linear_recursion1} and
\eqref{eqn:nonautonomous_linear_recursion2} that
\begin{equation*}
L_n^{(2l+1)}[q_n^m] =
\frac{1}{\epsilon_n^{(2l+1)}}\left(e^{\partial_n}-1\right)L_n^{(2l)}[q_n^m],\quad
L_n^{(2l+2)}[q_n^m] =
\frac{1}{\epsilon_n^{(2l+2)}}\left(1-e^{-\partial_n}\right)L_n^{(2l+1)}[q_n^m],
\end{equation*}
which are equivalent to
\begin{align*}
K_n^{(2l+1)}[u_n^m]
 =\frac{1}{\epsilon_n^{(2l+1)}}\left(u_{n}^me^{\partial_n}-1\right)K_n^{(2l)}[u_n^m],\quad
K_n^{(2l+2)}[u_n^m] =\frac{1}{\epsilon_n^{(2l+2)}}\left(1-\frac{1}{u_{n-1}^m}e^{-\partial_n}\right)K_n^{(2l+1)}[u_n^m].
\end{align*}
Thus we have \eqref{eqn:recursion_nonautonomous1}.
The second half is verified by a direct computation.
\end{proof}
\begin{rem}\label{rem:autonomization}\rm
The non-autonomous discrete Burgers hierarchy \eqref{eqn:nonautonomous_discrete_Burgers} reduces to the discrete Burgers hierarchy \eqref{d-burgers-higher} by putting (see \eqref{eqn:divided_diffrence_and_derivative})
\begin{equation*}
x_{n+1}-x_n = \epsilon,\quad L_n^{(i)}[q_n^m] = \frac{1}{i!}\widehat{L}^{(i)}[q_n^m],\quad
K_n^{(i)}[u_n^m] = \frac{1}{i!}\widehat{K}^{(i)}[u_n^m],\quad 
\Omega_n^{(1,i)} = \frac{1}{i}\widehat{\Omega}_n.
\end{equation*}
\end{rem}

\end{document}